\definecolor{linkcol}{rgb}{0,0,0.38}
\definecolor{citecol}{rgb}{0.8,0,0}
\definecolor{urlcol}{rgb}{0.1,0.35,0}
\DeclareFontFamily{U}{BOONDOX-calo}{\skewchar\font=45 }
\DeclareFontShape{U}{BOONDOX-calo}{m}{n}{
  <-> s*[1.05] BOONDOX-r-calo}{}
\DeclareFontShape{U}{BOONDOX-calo}{b}{n}{
  <-> s*[1.05] BOONDOX-b-calo}{}
\DeclareMathAlphabet{\link}{U}{BOONDOX-calo}{m}{n}
\DeclareMathAlphabet{\blink}{U}{BOONDOX-calo}{b}{n}
      \name{author}{2}{}{%
        {{hash=f52604caf85daefd72ea4964a46f443c}{%
           family={Alemán-Espinosa},
           familyi={A\bibinithyphendelim E\bibinitperiod},
           given={David},
           giveni={D\bibinitperiod}}}%
        {{hash=1ce58784b8f129a5398dccc213f64e90}{%
           family={Kumar},
           familyi={K\bibinitperiod},
           given={Nikhil},
           giveni={N\bibinitperiod}}}%
      }
      \name{author}{3}{}{%
        {{hash=2915f138b4a2fd524ab1e043a4761538}{%
           family={Ahuja},
           familyi={A\bibinitperiod},
           given={R.\bibnamedelimi K.},
           giveni={R\bibinitperiod\bibinitdelim K\bibinitperiod}}}%
        {{hash=2c2a7ae1b3e82c35184bc5ec23bed562}{%
           family={Magnanti},
           familyi={M\bibinitperiod},
           given={T.L.},
           giveni={T\bibinitperiod}}}%
        {{hash=fc4672b506d73d8caadda3f0b1c6a9bb}{%
           family={Orlin},
           familyi={O\bibinitperiod},
           given={J.\bibnamedelimi B.},
           giveni={J\bibinitperiod\bibinitdelim B\bibinitperiod}}}%
      }
      \name{author}{3}{}{%
        {{hash=693c1222c2e05b1dc98a8bd66955efcb}{%
           family={Bansal},
           familyi={B\bibinitperiod},
           given={N.},
           giveni={N\bibinitperiod}}}%
        {{hash=db5336b8411f8f048abd10dff79894c7}{%
           family={Rohwedder},
           familyi={R\bibinitperiod},
           given={L.},
           giveni={L\bibinitperiod}}}%
        {{hash=e57646ad448cd7adfe9097ad0cfff521}{%
           family={Svensson},
           familyi={S\bibinitperiod},
           given={O.},
           giveni={O\bibinitperiod}}}%
      }
      \name{author}{1}{}{%
        {{hash=5c0e5aca0990e7da137945af1f317f02}{%
           family={Däubel},
           familyi={D\bibinitperiod},
           given={Karl},
           giveni={K\bibinitperiod}}}%
      }
      \name{author}{3}{}{%
        {{hash=b4e08c044e2dd27289732a97d49c59ff}{%
           family={Dinitz},
           familyi={D\bibinitperiod},
           given={Y.},
           giveni={Y\bibinitperiod}}}%
        {{hash=ad6bc974647d5c9ab2702d102b9bca6b}{%
           family={Garg},
           familyi={G\bibinitperiod},
           given={N.},
           giveni={N\bibinitperiod}}}%
        {{hash=b064c6042cf492d2a9c396f25aa9c366}{%
           family={Goemans},
           familyi={G\bibinitperiod},
           given={M.\bibnamedelimi X.},
           giveni={M\bibinitperiod\bibinitdelim X\bibinitperiod}}}%
      }
      \name{author}{3}{}{%
        {{hash=38a2d822c8bdccff8c37876b3745583e}{%
           family={Grandoni},
           familyi={G\bibinitperiod},
           given={F.},
           giveni={F\bibinitperiod}}}%
        {{hash=a8182c51166a9053efb0c7429cc77adb}{%
           family={Mömke},
           familyi={M\bibinitperiod},
           given={T.},
           giveni={T\bibinitperiod}}}%
        {{hash=6290b1597927c881c51324d5fba7b4b5}{%
           family={Wiese},
           familyi={W\bibinitperiod},
           given={A.},
           giveni={A\bibinitperiod}}}%
      }
      \name{author}{1}{}{%
        {{hash=d3519e67d5153628a733335c8362a049}{%
           family={Kleinberg},
           familyi={K\bibinitperiod},
           given={J.\bibnamedelimi M.},
           giveni={J\bibinitperiod\bibinitdelim M\bibinitperiod}}}%
      }
      \name{author}{1}{}{%
        {{hash=d3519e67d5153628a733335c8362a049}{%
           family={Kleinberg},
           familyi={K\bibinitperiod},
           given={J.\bibnamedelimi M.},
           giveni={J\bibinitperiod\bibinitdelim M\bibinitperiod}}}%
      }
      \name{author}{1}{}{%
        {{hash=7934b84ce451a5a1517433080d513617}{%
           family={Kolliopoulos},
           familyi={K\bibinitperiod},
           given={S.\bibnamedelimi G.},
           giveni={S\bibinitperiod\bibinitdelim G\bibinitperiod}}}%
      }
      \name{author}{2}{}{%
        {{hash=d90962abd3744913041ba7d0dbb7d212}{%
           family={Linhares},
           familyi={L\bibinitperiod},
           given={A.},
           giveni={A\bibinitperiod}}}%
        {{hash=c4d3a82dcb22b12ab17b93c00aa130e3}{%
           family={Swamy},
           familyi={S\bibinitperiod},
           given={C.},
           giveni={C\bibinitperiod}}}%
      }
      \name{author}{2}{}{%
        {{hash=0f0a93c8e94027407f60518ea3ad8716}{%
           family={Morell},
           familyi={M\bibinitperiod},
           given={S.},
           giveni={S\bibinitperiod}}}%
        {{hash=3796a5932401cf43771cad21de625876}{%
           family={Skutella},
           familyi={S\bibinitperiod},
           given={M.},
           giveni={M\bibinitperiod}}}%
      }
      \name{author}{3}{}{%
        {{hash=0d347be14e9df11b195fd279707a83b3}{%
           family={Majthoub\bibnamedelima Almoghrabi},
           familyi={M\bibinitperiod\bibinitdelim A\bibinitperiod},
           given={Mohammed},
           giveni={M\bibinitperiod}}}%
        {{hash=ef4121493e58cac7a261676936a91203}{%
           family={Skutella},
           familyi={S\bibinitperiod},
           given={Martin},
           giveni={M\bibinitperiod}}}%
        {{hash=4ef60854258e7f310b9075b47fa5bd1a}{%
           family={Warode},
           familyi={W\bibinitperiod},
           given={Philipp},
           giveni={P\bibinitperiod}}}%
      }
      \name{author}{1}{}{%
        {{hash=3796a5932401cf43771cad21de625876}{%
           family={Skutella},
           familyi={S\bibinitperiod},
           given={M.},
           giveni={M\bibinitperiod}}}%
      }
      \name{author}{1}{}{%
        {{hash=ef4121493e58cac7a261676936a91203}{%
           family={Skutella},
           familyi={S\bibinitperiod},
           given={Martin},
           giveni={M\bibinitperiod}}}%
      }
      \name{author}{2}{}{%
        {{hash=ad7acdbbc2eed2df1648a95fff98be94}{%
           family={Shapley},
           familyi={S\bibinitperiod},
           given={Richard},
           giveni={R\bibinitperiod}}}%
        {{hash=6465ae9462ff5d959973dae65c9faaa2}{%
           family={Shmoys},
           familyi={S\bibinitperiod},
           given={David\bibnamedelima B},
           giveni={D\bibinitperiod\bibinitdelim B\bibinitperiod}}}%
      }
      \name{author}{3}{}{%
        {{hash=25d5b8c6eeb2588f4cd646cfe8d0827b}{%
           family={Schrijver},
           familyi={S\bibinitperiod},
           given={Alexander},
           giveni={A\bibinitperiod}}}%
        {{hash=295a42902de7a6cc6dcb94ca30637596}{%
           family={Seymour},
           familyi={S\bibinitperiod},
           given={Paul},
           giveni={P\bibinitperiod}}}%
        {{hash=6fa0bd56ef1e47c0f7749a5650d86324}{%
           family={Winkler},
           familyi={W\bibinitperiod},
           given={Peter},
           giveni={P\bibinitperiod}}}%
      }
      \name{author}{3}{}{%
        {{hash=86435d42fde1aef89473f886121a79eb}{%
           family={Traub},
           familyi={T\bibinitperiod},
           given={V.},
           giveni={V\bibinitperiod}}}%
        {{hash=20912772e6cb1843d291c7104436624c}{%
           family={Vargas\bibnamedelima Koch},
           familyi={V\bibinitperiod\bibinitdelim K\bibinitperiod},
           given={L.},
           giveni={L\bibinitperiod}}}%
        {{hash=159a9da7df6e4b5c4e08c591850259b4}{%
           family={Zenklusen},
           familyi={Z\bibinitperiod},
           given={R.},
           giveni={R\bibinitperiod}}}%
      }
      \name{author}{1}{}{%
        {{hash=46fe91d24ac1c8530d9e60ed85cccf1e}{%
           family={Williamson},
           familyi={W\bibinitperiod},
           given={D.\bibnamedelimi P.},
           giveni={D\bibinitperiod\bibinitdelim P\bibinitperiod}}}%
      }
\patchcmd\blx@bblinput{\blx@blxinit}
                      {\blx@blxinit
                      }{}{\fail}
\newtheoremstyle{light} {\topsep}                    {\topsep}                    {\itshape}                   {}                           {\scshape}                   {.}                          {.5em}                       {}  
\newtheorem{theorem}{Theorem}[section]
\newtheorem{lemma}[theorem]{Lemma}
\newtheorem{proposition}[theorem]{Proposition}
\newtheorem{definition}[theorem]{Definition}
\newtheorem{corollary}[theorem]{Corollary}
\newtheorem{conjecture}[theorem]{Conjecture}
\theoremstyle{light}
\crefname{claiminproof}{Claim}{Claims}
\crefname{claiminproof}{claim}{claims}
\crefname{algocf}{Algorithm}{Algorithms}
\crefname{algocf}{algorithm}{algorithms}
\crefname{conjecture}{Conjecture}{Conjectures}
\crefname{conjecture}{conjecture}{conjectures}
\crefname{thm}{Theorem}{Theorems}
\crefname{thm}{theorem}{theorems}
\crefname{lem}{Lemma}{Lemmas}
\crefname{lem}{lemma}{lemmas}
\newcommand{\labeltarget}[1]{\Hy@raisedlink{\hypertarget{#1}{}}}
\setlist[enumerate]{nosep,topsep=0.1em}
\setlist[enumerate,1]{label=(\roman*), leftmargin=2.2em}
\setlist[itemize]{nosep,topsep=0.3em}
\newcommand\appendtographicspath[1]{\g@addto@macro\Ginput@path{#1}}
\let\truehypersetup\hypersetup
\renewcommand\hypersetup[1]{}
\let\hypersetup\truehypersetup
\DeclareMathOperator{\argmin}{argmin}
\DeclareMathOperator{\load}{load}
\renewcommand{\epsilon}{\varepsilon}
\renewcommand{\NP}{\text{NP}}
\let\@@pmod\pmod
\DeclareRobustCommand{\pmod}{\@ifstar\@pmods\@@pmod}
\def\@pmods#1{\mkern8mu({\operator@font mod}\mkern 6mu#1)}
\let\@@mod\mod
\DeclareRobustCommand{\mod}{\@ifstar\@mods\@@mod}
\def\@mods#1{\mkern8mu{\operator@font mod}\mkern 6mu#1}
\definecolor{green}{rgb}{0.4,0.85,0.6}
\title{Unsplittable Cost Flows from Unweighted Error-Bounded Variants}
  \author{}
\author{
Chaitanya Swamy\thanks{
Dept.~of Combinatorics \& Optimization, Univ.~Waterloo, Waterloo, Canada.
Email: \href{mailto:cswamy@uwaterloo.ca}{cswamy@uwaterloo.ca}.
}
\and
Vera Traub\thanks{
Department of Computer Science, ETH Zurich, Zurich, Switzerland.
Email: \href{mailto:vtraub@ethz.ch}{vtraub@ethz.ch}.
}
\and
Laura Vargas Koch\thanks{
RWTH Aachen University, Aachen, Germany.
Email: \href{mailto:vargaskoch@gdm.rwth-aachen.de}{vargaskoch@gdm.rwth-aachen.de}
}
\and
Rico Zenklusen\thanks{
Department of Mathematics, ETH Zurich, Zurich, Switzerland.
Email: \href{mailto:ricoz@ethz.ch}{ricoz@ethz.ch}.}
}
\date{}
\begin{document}

\maketitle
\thispagestyle{empty}
\addtocounter{page}{-1}

\begin{abstract}
A famous conjecture of Goemans on single-source unsplittable flows states that one can turn any fractional flow into an unsplittable one of no higher cost, while increasing the load on any arc by at most the maximum demand.
Despite extensive work on the topic, only limited progress has been made.
Recently, Morell and Skutella suggested an alternative conjecture, stating that one can turn any fractional flow into an unsplittable one without changing the load on any arc by more than the maximum demand.

We show that their conjecture implies Goemans' conjecture (with a violation of twice the maximum demand).
To this end, we generalize a technique of Linhares and Swamy, used to obtain a low-cost chain-constrained spanning tree from an algorithm without cost guarantees.
Whereas Linhares and Swamy's proof relies on Langrangian duality, we provide a very simple elementary proof of a generalized version, which we hope to be of independent interest.
Moreover, we show how this technique can also be used in the context of the weighted ring loading problem, showing that cost-unaware approximation algorithms can be transformed into approximation algorithms with additional cost guarantees.
\end{abstract}
 
\section{Introduction}
\label{sec:intro}

Network flow problems are among the most basic and ubiquitous problems in combinatorial optimization, both in theory and practice.
(We refer the interested reader to the classical monograph \cite{ahujaNetworkFlowsTheory1993} and the more recent textbook \cite{williamsonNetworkFlowAlgorithms2019} on the topic.)
In the classical maximum flow problem, flow can split and merge at nodes.
However, in many natural settings, this is not possible, which naturally leads to the notion of unsplittable flows.
In an unsplittable flows, the complete flow from a source to a destination needs to be sent along a single path.

One of the most heavily studied questions in this context is the single-source unsplittable flow problem (SSUF), originally introduced by Kleinberg~\cite{kleinbergApproximationAlgorithmsDisjoint1996,kleinbergSingleSourceUnsplittableFlow1996}.
In SSUF, we are given a directed graph $(V,A)$ with a single source $s\in S$, and several sinks $T=\{t_1, \ldots, t_k\} \subseteq V$.
Additionally, each sink $t\in T$ comes with a demand $d_t\in \mathbb{R}_{\geq 0}$.
Without loss of generality, we assume that the sinks are distinct.
We refer to the tuple $G=(V,A,s,T,d)$ as an \emph{SSUF network}.
Moreover, if each arc $a\in A$ has additionally a nonnegative cost $c(a)\in \mathbb{R}_{\geq 0}$, then we call the tuple $G=(V,A,s,T,d,c)$ a \emph{weighted SSUF network}.

In their canonical form, unsplittable flow problems ask to find, for a given SSUF network $(V,A,s,T,d)$ with arc capacities $u: A \to \mathbb{R}_{\geq 0}$, an $s$-$t$ path $P_t \subseteq A$ for each $t\in T$ so that
\begin{equation*}
  \sum_{t\in T: a\in P_t} d_t \leq u(a) \qquad \forall a\in A.
\end{equation*}
In other words, when sending, for each $t\in T$, a flow of $d_t$ units from $s$ to $t$ along the path $P_t$, the flow on each arc $a\in A$ does not exceed its capacity $u(a)$.
In weighted/cost versions of the problem, one is additionally interested in minimizing the total cost $\sum_{t\in T} c(P_t)$ of the unsplittable flow.
For brevity, given an unsplittable flow $\mathcal{P}=(P_t)_{t\in T}$ in an SSUF network $G$ (weighted or unweighted), we denote by $f^{\mathcal{P}}\in \mathbb{R}^A_{\geq 0}$ the corresponding arc-flow, i.e.,
\begin{equation*}
  f^{\mathcal{P}}(a) \coloneqq \sum_{t\in T: a\in P_t} d_t \qquad \forall a\in A.
\end{equation*}

Even just deciding whether an unsplittable flow exists is \NP-hard.
However, Dinitz, Garg, and Goemans~\cite{dinitzSingleSourceUnsplittableFlow1999} showed that, interestingly, if we allow for a slight violation of the capacities, determining an unsplittable flow becomes computationally tractable.
\begin{theorem}[\cite{dinitzSingleSourceUnsplittableFlow1999}]\label{thm:dinitzMain}
  Let $(V,A,s,T,d)$ be an SSUF network, and let $x\in \mathbb{R}^A_{\geq 0}$ be a fractional flow.
  Then there is a polynomial-time algorithm that finds an unsplittable flow $\mathcal{P}$ satisfying
  \begin{equation*}
     f^{\mathcal{P}}(a) \leq x(a) + d_{\max} \qquad \forall a\in A,
  \end{equation*}
  where $d_{\max} = \max_{t\in T} d_t$.
\end{theorem}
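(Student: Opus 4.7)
The plan is iterative rerouting, in the style of Dinitz, Garg, and Goemans. Start by decomposing $x$ into a weighted multiset of $s$-to-sink paths $\mathcal{Q} = \{(Q_i, y_i)\}$ with $\sum_{i : Q_i \text{ ends at } t} y_i = d_t$ for every $t \in T$; one may assume $x$ is acyclic since flow cycles can be cancelled without increasing loads. Define the progress measure $\Phi(\mathcal{Q}) := |\mathcal{Q}| - |T|$; it is nonnegative, and equals zero precisely when $\mathcal{Q}$ corresponds to an unsplittable flow. The goal is to drive $\Phi$ to zero while maintaining the invariant $f(a) \leq x(a) + d_{\max}$, where $f$ denotes the arc-flow induced by the current decomposition.

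The inductive step is the following lemma: whenever $\Phi(\mathcal{Q}) \geq 1$, there exist two paths $Q_i, Q_j \in \mathcal{Q}$ ending at a common sink $t$ together with a way to transfer flow of value $\delta > 0$ from one to the other --- or, more generally, along an alternating sequence of paths --- so that $\Phi$ strictly decreases and the invariant is preserved. Since $\Phi \geq 1$, some sink is split, so at least two paths share an endpoint. If $\delta := \min(y_i, y_j)$ units of flow can be shifted from $Q_j$ onto $Q_i$ without exceeding the bound $x(a) + d_{\max}$ anywhere, we are done. Otherwise, a ``chain reaction'' of rerouting is required: the excess along $Q_i$'s arcs is compensated by shifting flow on paths ending at other sinks, and so on. This yields a closed alternating cycle of paths $Q_{i_1}, \ldots, Q_{i_\ell}$ (alternating between ``increase'' and ``decrease'' by $\delta$) whose rerouting preserves all sink demands and zeroes out at least one weight $y_{i_r}$, hence decreases $\Phi$ by at least one.

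The subtle point, and the main obstacle, is the load accounting: one must verify that the rerouting increases the flow on any arc by at most $d_{\max}$. On a given arc $a$, the net change in $f(a)$ is a signed sum of $\pm \delta$ contributions, one per path $Q_{i_r}$ traversing $a$. The alternating structure of the cycle should force these contributions to telescope, leaving at most one residual $+\delta$ term; since $\delta \leq \min_r y_{i_r} \leq d_{\max}$, the invariant is preserved. Establishing this cancellation cleanly requires a careful case analysis that leverages acyclicity of $x$ and the precise way the alternating cycle is constructed --- and this is the technical heart of the proof. Granting it, iterating the augmentation at most $|\mathcal{Q}| - |T| = O(|A|)$ times drives $\Phi$ to zero in polynomial time and produces the desired unsplittable flow.
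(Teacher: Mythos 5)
The paper does not prove this statement; it is cited verbatim from Dinitz, Garg, and Goemans~\cite{dinitzSingleSourceUnsplittableFlow1999} and used as a black box, so there is no in-paper proof to compare against. I will therefore assess the proposal on its own.

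Your sketch is in the right general spirit (iterative rerouting, which is indeed the style of DGG), but it has a genuine gap, which you in fact acknowledge: ``Establishing this cancellation cleanly requires a careful case analysis \dots and this is the technical heart of the proof. Granting it \dots.'' The step you grant yourself is precisely the content of the theorem, and the load accounting you sketch does not close it. You argue that the net change on any arc $a$ in one rerouting step is at most one residual $+\delta$ term, with $\delta \le \min_r y_{i_r} \le d_{\max}$. But bounding the \emph{per-iteration} increase by $d_{\max}$ does not preserve the invariant $f(a) \le x(a) + d_{\max}$: if $f(a)$ is already close to $x(a)+d_{\max}$ at the start of an iteration, a further $+\delta$ increase overshoots. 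To preserve the invariant you would have to cap $\delta$ by the remaining slack $x(a)+d_{\max}-f(a)$ over all arcs receiving a $+\delta$ contribution; but once $\delta$ is capped this way, it is no longer the minimum of the path weights, so no $y_{i_r}$ zeroes out, $\Phi$ need not decrease, and termination in $O(|A|)$ steps (or at all) is no longer guaranteed. Resolving this tension between bounded excess and strict progress is exactly what makes the DGG argument nontrivial; their construction is substantially more delicate (they iteratively advance a virtual source along the DAG and route one sink at a time along a carefully chosen path, with an invariant that charges the $d_{\max}$ excess on any arc to a \emph{single} still-unsettled demand, rather than invoking a min-weight-path bound). As written, your proposal identifies the right progress measure and the right flavor of move, but it does not constitute a proof.
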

Note that, when given an SSUF network $(V,A,s,T,d)$ with capacities $u\colon A \to \mathbb{R}_{\geq 0}$, the above statement implies the following natural implication.
If there exists a fractional flow respecting the capacities $u$, then one can, in polynomial time, find an unsplittable flow violating each capacity by at most $d_{\max}$.
Indeed, to obtain this statement from \Cref{thm:dinitzMain}, one first computes a fractional flow $x$ that respects the capacities, and then applies \Cref{thm:dinitzMain} to find an unsplittable flow $(P_t)_{t\in T}$ with $f^{\mathcal{P}}(a) \leq x(a) + d_{\max} \leq u(a) + d_{\max}$ for all $a\in A$.
Finally, if no fractional flow exists, then there is also no unsplittable flow that obeys the capacities.

A famous conjecture of Goemans claims that \Cref{thm:dinitzMain} can be generalized to the weighted case as follows (see~\cite{skutellaApproximatingSingleSource2002}).
\begin{conjecture}[Goemans]\label{conj:goemans}
  Let $(V, A, s, T, d, c)$ be a weighted SSUF network, and let $x\in \mathbb{R}^A_{\geq 0}$ be a fractional flow.
  Then there is a polynomial-time algorithm that finds an unsplittable flow $\mathcal{P}$ satisfying 
  \begin{itemize}
    \item $f^{\mathcal{P}}(a) \leq x(a) + d_{\max} \quad \forall a\in A$, and
    \item $c^T f^{\mathcal{P}} \leq c^T x$.
  \end{itemize}
\end{conjecture}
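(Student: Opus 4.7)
The plan is to bootstrap the cost-oblivious guarantee of Dinitz-Garg-Goemans (\Cref{thm:dinitzMain}) into a cost-preserving rounding by wrapping it in the generalization of the Linhares-Swamy reduction announced in the abstract. Writing $C \coloneqq c^T x$, the target is an unsplittable flow $\mathcal{P}$ that simultaneously achieves $f^{\mathcal{P}}(a) \leq x(a) + d_{\max}$ on every arc and $c^T f^{\mathcal{P}} \leq C$, matching Goemans' statement with no multiplicative loss on either side.

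First I would set up a parametric family of inputs to the DGG procedure. Let $X$ be the polytope of fractional flows routing the demand $d$ from $s$ to $T$ under the upper bound $x$. For each Lagrangian parameter $\lambda \geq 0$, let $x^{\lambda} \in X$ be a minimizer of $(c + \lambda \mathbbm{1})^T y$ over $y \in X$; by construction $x^{\lambda} \leq x$ coordinate-wise need not hold, but each $x^{\lambda}$ is a fractional flow satisfying $c^T x^{\lambda} \leq C$. Feeding $x^{\lambda}$ to \Cref{thm:dinitzMain} produces an unsplittable flow $\mathcal{P}^{\lambda}$ with $f^{\mathcal{P}^{\lambda}}(a) \leq x^{\lambda}(a) + d_{\max}$. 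To transfer this back to a bound in terms of $x$, I would restrict attention to those parameters for which $x^{\lambda} \leq x$ coordinate-wise, equivalently, to a face of $X$ where the capacity profile is respected arc-by-arc.

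The second step is the cost analysis via the generalized Linhares-Swamy argument. As $\lambda$ grows from $0$ to $\infty$, the fractional flows $x^{\lambda}$ sweep out finitely many breakpoints; at a critical $\lambda^\star$, two adjacent optima $x^{\lambda_1}, x^{\lambda_2}$ sandwich the cost target. I would then combine $\mathcal{P}^{\lambda_1}$ and $\mathcal{P}^{\lambda_2}$ by a sink-by-sink exchange: for each $t \in T$, reassign $t$ to whichever of its two paths is cheaper, and, where the choice would break the capacity guarantee, perform augmentations along the symmetric difference of the two unsplittable flows. The elementary form of the reduction promised in the abstract should reduce this to showing that on the symmetric difference, cost-improving swaps do not inflate any arc load beyond $x(a) + d_{\max}$, since both $\mathcal{P}^{\lambda_i}$ already obey this bound and the swap only moves mass along edges present in one or the other.

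The main obstacle, and the reason this conjecture remains open, is precisely in closing the cost gap without any slack. DGG's augmenting-cycle construction is not cost-monotone: two fractional flows differing slightly in cost can be rounded to unsplittable flows with topologically very different routings, so the symmetric difference $\mathcal{P}^{\lambda_1} \triangle \mathcal{P}^{\lambda_2}$ need not decompose into cost-improving exchanges of the required strength. Known approaches via this route, notably Skutella's analysis, incur a multiplicative factor of $2$ or $3$ in cost; recovering the conjectured factor of exactly $1$ seems to require either a cost-aware refinement of the DGG rounding itself, so that the family $\{\mathcal{P}^{\lambda}\}_\lambda$ varies continuously in cost, or a structural lemma showing that the critical-$\lambda^\star$ exchange on $X$ always admits a decomposition into $d_{\max}$-bounded swaps. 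Both appear to lie strictly beyond what the generalized Linhares-Swamy reduction delivers in black-box form, which is why the paper instead establishes the reduction to the Morell-Skutella conjecture with the weaker violation $2 d_{\max}$.
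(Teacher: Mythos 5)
This statement is Goemans' conjecture, which the paper presents as an \emph{open conjecture}: it neither proves it nor claims to. Your proposal correctly comes around to this in the final paragraph, where you acknowledge that the approach cannot be closed and that the paper instead establishes a reduction from Morell and Skutella's \Cref{conj:morellLowerBounds} to the weaker \Cref{conj:costImplication} with violation $2d_{\max}$. That bottom line is right, and you are candid about the gap.

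However, the structural reason your first two steps cannot be completed is more basic than the ``cost-monotonicity'' or ``swap decomposition'' issues you diagnose, and it is exactly the reason the paper takes \Cref{conj:morellLowerBounds} as its hypothesis rather than \Cref{thm:dinitzMain}. The generalized Linhares--Swamy machinery turns a $(Q,Z)$-FPRA with error body $R$ into a cost-aware rounding whose output lies in $x + (R - R)$. For that bound to say anything, $R - R$ must be bounded, which forces $R$ to bound the rounding error from both sides. The Dinitz--Garg--Goemans guarantee supplies only $f^{\mathcal{P}}(a) \leq x(a) + d_{\max}$; the corresponding error body $R = \{\, r \in \mathbb{R}^A \colon r(a) \leq d_{\max}\ \forall a \,\}$ is unbounded below, hence $R - R = \mathbb{R}^A$ and \Cref{thm:roundingGuarantees} yields no capacity guarantee at all. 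The two-sided bound $x(a) - d_{\max} \leq f^{\mathcal{P}}(a) \leq x(a) + d_{\max}$ of \Cref{conj:morellLowerBounds} is precisely what gives a bounded error body $R = [-d_{\max}, d_{\max}]^A$, and then $R - R = [-2d_{\max}, 2d_{\max}]^A$ produces the $2d_{\max}$ violation of \Cref{conj:costImplication}. Your sketch, by feeding DGG directly into the framework, skips over this obstruction entirely.

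Finally, the Lagrangian sweep over $\lambda$, the critical $\lambda^\star$, and the sink-by-sink exchange across $\mathcal{P}^{\lambda_1} \bigtriangleup \mathcal{P}^{\lambda_2}$ do not resemble the paper's construction. \Cref{algo:main} solves a single LP, $\min\{c^T y \colon y \in Q \cap (x - R)\}$, and makes a single call to the FPRA; the cost analysis is a short, elementary convexity argument that constructs an auxiliary feasible point $\hat y$ on the segment between $x$ and the extrapolated point $\bar y = y^* + \varepsilon(y^* - z)$, explicitly avoiding Lagrangian duality (which the original Linhares--Swamy argument used and which the paper deliberately dispenses with). So even as a sketch of the paper's method, your route diverges both in its subroutine (DGG rather than Morell--Skutella) and in its mechanism (parametric sweep and combinatorial exchange rather than one LP plus a convexity argument).
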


Note that in \Cref{conj:goemans}, one can assume without loss of generality that the underlying graph $(V,A)$ is acyclic.
Indeed, for a general (not necessarily acyclic) graph, we can first simplify the fractional flow $x$ by eliminating flow on cycles, and then delete all arcs not carrying any fractional flow.
This leads to a new fractional flow $x' \leq x$ on an acyclic graph, and any unsplittable flow satisfying the requirements of the conjecture for $x'$ will also satisfy them for $x \geq x'$.

Let $Q_G\subseteq \mathbb{R}^A_{\geq 0}$ be the polytope of all fractional (arc) flows, i.e.,
\begin{equation*}
  Q_G \coloneqq \left\{x \in \mathbb{R}_{\geq 0}^A \colon \text{ for $v\in V$ we have } x(\delta^+(v)) - x(\delta^-(v)) =
  \begin{cases}
    \sum_{t\in T} d_t &\text{if }v=s, \\
    -d_v &\text{if }v\in T, \\
    0 &\text{if }v \in V\setminus (\{s\}\cup T)
  \end{cases}
  \right\}.
\end{equation*}

Hence, using this notation, \Cref{thm:dinitzMain} says that given a point $x \in Q_G$, one can obtain in polynomial time an unsplittable flow $\mathcal{P}= (P_t)_{t\in T}$ with $f^{\mathcal{P}}(a) \leq x(a) + d_{\max}$ for all $a\in A$.

Recently, Morell and Skutella~\cite{morellSingleSourceUnsplittable2022} suggested two additional conjectures, a weaker and a stronger version, to strengthen \Cref{thm:dinitzMain}.
In the weaker version, stated below, instead of requiring a cost guarantee as in \Cref{conj:goemans}, the conjecture requires additional lower bounds on the flow values.
Contrary to \Cref{conj:goemans}, in \Cref{conj:morellLowerBounds} below it is important to {\em impose} that the network is acyclic (see the discussion in \cite{traubSingleSourceUnsplittableFlows2024}). 
\begin{conjecture}[\cite{morellSingleSourceUnsplittable2022}]\label{conj:morellLowerBounds}
Let $G=(V,A,s,T,d)$ be an acyclic SSUF network and let $x\in Q_G$.
Then one can compute in polynomial time an unsplittable flow $\mathcal{P}$ in $G$ with
\begin{equation*}
  x(a) - d_{\max} \leq f^{\mathcal{P}}(a) \leq x(a) + d_{\max} \qquad \forall a\in A.
\end{equation*}
\end{conjecture}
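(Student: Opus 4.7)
The natural starting point is the augmenting-cycle approach behind \Cref{thm:dinitzMain}: iteratively reroute the full demand of one sink along a carefully chosen $s$-$t$ path, replacing another, so that each single operation moves $d_t\le d_{\max}$ units of flow on the two paths involved. In the classical Dinitz--Garg--Goemans proof, upper-bound violations are controlled by amortizing increases against the original fractional flow, but the same arc can suffer arbitrarily many decreases over the course of the algorithm, which is precisely why only the one-sided bound $f^{\mathcal{P}}\le x+d_{\max}$ is obtained. The plan is to design a refined version of this procedure for acyclic networks in which each arc is touched by at most one increase \emph{and} at most one decrease, so that both bounds hold with the same $d_{\max}$.

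To this end, I would fix a topological ordering of $V$ and maintain the current flow $y$, initialized to $x$, together with the set of sinks $t \in T$ still carrying fractional flow. In each round I would select a sink $t$ and work with a path-flow decomposition $y|_t = \sum_i \lambda_i \mathbbm{1}_{P_i}$ of its flow. The round attempts to reduce the number of paths for $t$ by one via a swap step: choose two paths $P_i,P_j$ for $t$ and push flow from one to the other until either a $\lambda$-coefficient hits $0$ or an arc hits one of the boundary values $x(a)\pm d_{\max}$. The key structural claim would be that, in an acyclic setting, one can always choose the pair $(P_i,P_j)$ and the direction of the swap so that the swap terminates with a coefficient reaching $0$ rather than with a boundary-arc blocking it; acyclicity should preclude the circular augmenting structures that obstruct similar claims in general graphs.

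The hard part is establishing exactly this structural claim, which is essentially the content of the conjecture. Concretely, one must rule out configurations in which every available swap for every sink is blocked simultaneously from both sides, that is, by an arc already at $x+d_{\max}$ on one of the two paths and by an arc at $x-d_{\max}$ on the other. To attack this I would introduce a two-sided slack potential on arcs and sinks, and argue by a topological induction along $V$ that a potential-improving swap always exists, using the fact that in a DAG the ``forward'' and ``backward'' saturated arcs cannot be linked by any residual cycle. If the direct approach stalls, a back-up plan is to derive the claim by reducing to two applications of \Cref{thm:dinitzMain}: once in the forward direction to get the upper bound, and once on a reverse auxiliary network in which lower-bound violations become upper-bound violations, and then carefully matching the two rounds of augmentations so that the already-achieved upper bound is not destroyed. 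Either way the core combinatorial obstruction is the same, and it is precisely this obstruction that has kept the conjecture open in its full generality.
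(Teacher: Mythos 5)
This statement is a \emph{conjecture} (due to Morell and Skutella), not a theorem: the paper does not prove it, and to date no one has. The paper only uses it as a hypothesis (\Cref{thm:mainUnsplitReduction} shows that \Cref{conj:morellLowerBounds} implies \Cref{conj:costImplication}). Your write-up is, accordingly, not a proof but a sketch of an augmenting/rerouting strategy, and you say so yourself: the ``key structural claim'' that some swap always terminates by zeroing a path coefficient rather than by hitting a two-sided boundary is, in your own words, ``essentially the content of the conjecture.'' Neither the proposed two-sided slack potential nor the back-up reduction to two applications of \Cref{thm:dinitzMain} is carried out, and both run into exactly the obstruction that keeps the problem open: controlling increases and decreases simultaneously when the same arc can be touched by many reroutings. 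There is therefore nothing in the paper to compare against, and what you have written is an honest but incomplete attack on an open problem rather than a proof.

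Two smaller technical remarks on the approach as outlined. First, the matching step in the back-up plan is not merely ``careful'': composing a forward DGG pass (controlling $f^{\mathcal{P}}\le x+d_{\max}$) with a reverse pass (controlling $f^{\mathcal{P}}\ge x-d_{\max}$) generically destroys the bound established by the first pass, and no mechanism is offered to prevent this. Second, acyclicity alone is known to be necessary for the two-sided statement (as the paper notes, citing~\cite{traubSingleSourceUnsplittableFlows2024}), but it is far from clear that it is sufficient; the claim that acyclicity ``precludes the circular augmenting structures'' is an intuition, not an argument, and it is precisely where any attempted proof must do the real work.
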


Moreover, Morell and Skutella~\cite{morellSingleSourceUnsplittable2022} also conjectured a stronger version of \Cref{conj:morellLowerBounds}, stated below, which imposes, in addition to lower and upper bounds, also cost guarantees.
\begin{conjecture}[\cite{morellSingleSourceUnsplittable2022}]\label{conj:morellStrong}
Let $G=(V,A,s,T,d)$ be an acyclic SSUF network and let $x\in Q_G$.
Then one can compute in polynomial time an unsplittable flow $\mathcal{P}$ in $G$ satisfying
\begin{itemize}
  \item $c^T f^{\mathcal{P}} \leq c^T x$, and
  \item $x(a) - d_{\max} \leq f^{\mathcal{P}}(a) \leq x(a) + d_{\max} \qquad \forall a\in A$.
\end{itemize}
\end{conjecture}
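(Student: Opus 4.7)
The plan is to try to derive \Cref{conj:morellStrong} as a consequence of \Cref{conj:morellLowerBounds} via a Linhares--Swamy-style parametric reduction, which is precisely the technique that the paper develops for \Cref{conj:goemans}. Since \Cref{conj:morellLowerBounds} already delivers the two-sided capacity guarantees within $d_{\max}$ in the unweighted setting, the task reduces to bolting the cost constraint $c^T f^{\mathcal{P}} \leq c^T x$ on top of the existing arc-wise window.

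First, I would introduce a Lagrangian parameter $\lambda \geq 0$ and study the family of fractional flows
\[
  x_\lambda \in \argmin\bigl\{\lambda \cdot c^T x' + \mathrm{dev}(x', x) : x' \in Q_G\bigr\},
\]
where $\mathrm{dev}(x', x)$ is a convex penalty on arc-wise deviation from $x$ (say $\|x'-x\|_\infty$ or an $\ell_1$-type surrogate). As $\lambda$ ranges over $[0,\infty)$, the cost $c^T x_\lambda$ decreases monotonically while the deviation grows. Using a parametric LP analysis I would identify a critical value $\lambda^\star$ together with two neighboring flows $x_{\lambda^-}, x_{\lambda^+}$ that bracket the target cost $c^T x$ and agree on all but a small, structured set of arcs.

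Next, apply \Cref{conj:morellLowerBounds} separately to $x_{\lambda^-}$ and $x_{\lambda^+}$ to obtain unsplittable flows $\mathcal{P}^-$ and $\mathcal{P}^+$. Since the two fractional flows differ only on a small support, I would then argue that $\mathcal{P}^-$ and $\mathcal{P}^+$ can be locally swapped into a single unsplittable flow $\mathcal{P}$ whose load stays within $d_{\max}$ of $x$ on every arc and whose cost is at most $c^T x$. Carrying out this local exchange without inflating the arc-wise deviation is the heart of the reduction and where an elementary combinatorial argument in the spirit of the one announced in the abstract would have to come in.

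The main obstacle is that the naive version of this plan composes two perturbations---the cost-driven shift from $x$ to $x_{\lambda^\star}$ and the unsplittable rounding guaranteed by \Cref{conj:morellLowerBounds}---each of which can individually contribute up to $d_{\max}$ on a single arc. This is precisely why the paper achieves \Cref{conj:goemans} only with a violation of $2d_{\max}$, and extracting the sharp $d_{\max}$ bound demanded by \Cref{conj:morellStrong} looks strictly harder. Overcoming it would likely require exploiting fine structural properties of the unsplittable flows produced by the algorithm underlying \Cref{conj:morellLowerBounds}, for instance showing that the symmetric difference of $\mathcal{P}^-$ and $\mathcal{P}^+$ decomposes into a single rerouting cycle whose contribution to the load change exactly cancels the shift from $x$ to $x_{\lambda^\star}$. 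I expect that without such additional structural control a pure black-box reduction cannot reach \Cref{conj:morellStrong}, which is consistent with its current status as an open conjecture not settled within the paper.
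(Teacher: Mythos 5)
You are right that \Cref{conj:morellStrong} is a conjecture stated in the paper, not a result it proves, and your closing assessment---that a black-box reduction from \Cref{conj:morellLowerBounds} composes two $d_{\max}$-sized perturbations and therefore stalls at $2d_{\max}$---is exactly why the paper only establishes \Cref{thm:mainUnsplitReduction} (i.e.\ \Cref{conj:morellLowerBounds} $\Rightarrow$ \Cref{conj:costImplication}), not the sharper \Cref{conj:morellStrong}. So your overall conclusion is correct and matches the paper's own stance.

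That said, your description of how the paper reaches $2d_{\max}$ does not match the mechanism actually used, and the differences matter if you want to probe whether the bound can be tightened. You propose a Lagrangian/parametric sweep over $\lambda \ge 0$ producing a family $x_\lambda$, locating a critical $\lambda^\star$, rounding two nearby points $x_{\lambda^-}, x_{\lambda^+}$ via \Cref{conj:morellLowerBounds}, and then combinatorially splicing the two resulting unsplittable flows. The paper does none of this. \Cref{algo:main} solves a \emph{single} auxiliary LP, $\min\{c^T y : y \in Q_G \cap (x - R)\}$ with $R = [-d_{\max}, d_{\max}]^A$, to obtain one fractional flow $y^*$, and then applies the rounding algorithm (viewed as a $(Q_G,Z_G)$-FPRA after deleting arcs with $y^*(a)=0$) exactly once to $y^*$. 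The cost bound $c^T z \le c^T x$ follows from a short geometric argument: because $z$ lies on the minimal face of $Q_G$ containing $y^*$, one can step a little past $y^*$ away from $z$ to get $\bar y = y^* + \epsilon(y^* - z) \in Q_G$, form the convex combination $\hat y = \tfrac{\epsilon}{1+\epsilon}x + \tfrac{1}{1+\epsilon}\bar y \in Q_G \cap (x - R)$, and invoke optimality of $y^*$ against $\hat y$. No local swapping of two unsplittable flows, no cycle-cancellation argument, and no Lagrangian duality are involved; indeed the paper explicitly advertises that it replaces the Lagrangian-duality proof of Linhares--Swamy with this elementary one. The parameter $\lambda \in (0,1]$ in \Cref{algo:modified} and \Cref{cor:tradeoff} is not a Lagrangian multiplier either---it just shrinks the box to $x - \lambda R$ to trade a $\tfrac{1}{\lambda}$ cost factor against a $(1+\lambda)d_{\max}$ violation, and at $\lambda=1$ recovers the $2d_{\max}$ outcome. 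Whether one can improve $2d_{\max}$ to $d_{\max}$ via stronger structural control of the rounding, as you speculate, remains open; the paper offers no such refinement.
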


We will show in the following that \Cref{conj:morellLowerBounds} allows for deriving the statement mentioned below, which yields a slightly weaker version of \Cref{conj:morellStrong} (and therefore also of \Cref{conj:goemans}), where the capacity violation can be $2 d_{\max}$ instead of $d_{\max}$.

\begin{conjecture}\label{conj:costImplication}
Let $G=(V,A,s,T,d,c)$ be a weighted, acyclic SSUF network and let $x\in Q_G$.
Then one can compute in polynomial time an unsplittable flow $\mathcal{P}$ in $G$ with
\begin{itemize}
  \item $c^T f^{\mathcal{P}} \leq c^T x$, and
  \item $x(a) - 2d_{\max} \leq f^{\mathcal{P}}(a) \leq x(a) + 2 d_{\max} \quad \forall a\in A$.
\end{itemize}
\end{conjecture}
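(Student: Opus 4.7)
My plan is to deduce Conjecture~\ref{conj:costImplication} from Conjecture~\ref{conj:morellLowerBounds} via an LP-based reduction inspired by the Linhares-Swamy technique.

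I would first set up the linear program
\[
\min\bigl\{\, c^{T}y : y\in Q_G,\ x(a)-d_{\max}\le y(a)\le x(a)+d_{\max}\ \ \forall a\in A\,\bigr\},
\]
which is feasible since $y=x$ is. Let $y^{*}$ be an optimal vertex; then automatically $c^{T}y^{*}\le c^{T}x$. Applying Conjecture~\ref{conj:morellLowerBounds} to $y^{*}\in Q_G$ yields an unsplittable flow $\mathcal{P}$ with $\lvert f^{\mathcal{P}}(a)-y^{*}(a)\rvert\le d_{\max}$ for every arc $a$, and the triangle inequality immediately gives the required flow bound $\lvert f^{\mathcal{P}}(a)-x(a)\rvert\le 2d_{\max}$.

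The main difficulty lies in establishing the cost bound $c^{T}f^{\mathcal{P}}\le c^{T}x$, since Conjecture~\ref{conj:morellLowerBounds} is cost-agnostic and its rounding step may inflate cost. Here I would exploit LP duality: since $f^{\mathcal{P}}$ and $y^{*}$ satisfy the same demand constraints, the vector $f^{\mathcal{P}}-y^{*}$ is a circulation, and hence $c^{T}f^{\mathcal{P}}-c^{T}y^{*}=\bar c^{T}(f^{\mathcal{P}}-y^{*})$ for the reduced cost $\bar c$ associated with node potentials from the LP dual. Complementary slackness pins down which of the two box constraints binds at $y^{*}(a)$ whenever $\bar c(a)\neq 0$, so the right-hand side is non-positive precisely when $f^{\mathcal{P}}(a)-y^{*}(a)$ lies on the ``cheap'' side of $y^{*}(a)$ on every such arc.

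To force this sign pattern, my plan is to apply Conjecture~\ref{conj:morellLowerBounds} not to $y^{*}$ directly but to a shifted flow $y^{**}\in Q_G$ whose value on each tight arc has been displaced from $y^{*}$ in the cheap direction, so that any $d_{\max}$-rounding of $y^{**}$ automatically respects the cost-friendly sign of $f^{\mathcal{P}}(a)-y^{*}(a)$. I expect the hard part to be constructing such a $y^{**}$ that stays in $Q_G$ (so that flow conservation is preserved) while remaining within $d_{\max}$ of $x$ in the $\ell_\infty$-norm; this is the essence of the ``generalized Linhares-Swamy technique'' promised in the abstract, and should be achievable by an elementary combinatorial construction---for instance by routing canceling adjustments through the non-tight arcs where the LP has freedom---bypassing the Lagrangian-duality machinery originally used by Linhares and Swamy.
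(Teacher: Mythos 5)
Your first half is on the right track: you set up exactly the same auxiliary LP as the paper, namely $\min\{c^T y : y\in Q_G\cap(x-R)\}$ with $R=[-d_{\max},d_{\max}]^A$, observe feasibility via $y=x$ so that $c^Ty^*\le c^Tx$, apply Conjecture~\ref{conj:morellLowerBounds} to $y^*$, and obtain the $2d_{\max}$ two-sided flow bound by composing the two $d_{\max}$-box errors. All of that matches the paper.

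However, you correctly identify the cost bound as the crux, and your proposed route through it has a genuine gap. You want to use LP duality and complementary slackness, and then construct a shifted point $y^{**}$ so that the rounding step ``lands on the cheap side'' arc by arc. But Conjecture~\ref{conj:morellLowerBounds} gives you no control over the \emph{direction} of the rounding error per arc, only a two-sided magnitude bound, so there is no lever to force the desired sign pattern; you acknowledge that constructing $y^{**}$ is the hard part, and you leave it unconstructed. This is the missing idea. The paper instead exploits that, after deleting arcs with $y^*(a)=0$, the rounding from Conjecture~\ref{conj:morellLowerBounds} is \emph{face-preserving}: the output $z=f^{\mathcal P}$ lies on the minimal face of $Q_G$ containing $y^*$. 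This is the one property that makes the argument close. It guarantees that $\bar y := y^* + \varepsilon(y^*-z)\in Q_G$ for some $\varepsilon>0$ (because every constraint tight at $y^*$ is tight at $z$). Then $\hat y := \frac{\varepsilon}{1+\varepsilon}x+\frac{1}{1+\varepsilon}\bar y = y^* + \frac{\varepsilon}{1+\varepsilon}(x-z)$ lies in $Q_G\cap(x-R)$, so $c^Ty^*\le c^T\hat y$, which rearranges to $c^Tz\le c^Tx$. No duality, no complementary slackness, no engineered $y^{**}$ — just convexity and the face-preserving property. Without noticing that Conjecture~\ref{conj:morellLowerBounds} can be made face-preserving (by pruning zero-flow arcs before calling it), your plan does not go through.
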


Hence, formally, we will prove the following.
\begin{theorem}\label{thm:mainUnsplitReduction}
  If \Cref{conj:morellLowerBounds} holds, then so does \Cref{conj:costImplication}.
\end{theorem}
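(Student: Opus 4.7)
The plan is to derive \Cref{conj:costImplication} from \Cref{conj:morellLowerBounds} via an abstract reduction that converts any cost-oblivious rounding oracle with $\ell_\infty$-error $\epsilon$ into a cost-respecting rounding algorithm with $\ell_\infty$-error $2\epsilon$. Applied with $\epsilon = d_{\max}$ to the polynomial-time oracle promised by \Cref{conj:morellLowerBounds}, such a reduction directly yields \Cref{conj:costImplication}. The goal is therefore to identify and prove this abstract statement in a sufficiently general form, and then to instantiate it for unsplittable flows; the paper advertises a \emph{simple elementary} proof in contrast to the Lagrangian-based argument of Linhares and Swamy.

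The skeleton of the reduction is as follows. Given $x \in Q_G$ and a cost vector $c$, I would first consider the ``perturbation polytope''
\[ P_x \;:=\; \left\{\, y \in Q_G \;:\; \norm{y - x}_\infty \leq d_{\max} \,\right\}, \]
which is non-empty because $x \in P_x$. For any $y \in P_x$, invoking the algorithm of \Cref{conj:morellLowerBounds} returns an unsplittable flow $\mathcal{P}$ with $\norm{f^{\mathcal{P}} - y}_\infty \leq d_{\max}$, and hence $\norm{f^{\mathcal{P}} - x}_\infty \leq 2 d_{\max}$ by the triangle inequality, regardless of how $y$ is chosen. The nontrivial part is to produce $y \in P_x$ (or, more likely, a small collection of candidate flows in $P_x$) so that the resulting $\mathcal{P}$ additionally satisfies $c^T f^{\mathcal{P}} \leq c^T x$.

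The main obstacle -- and what the generalized Linhares-Swamy lemma must handle -- is precisely this cost bound. Naively picking $y^* \in \argmin_{y \in P_x} c^T y$ guarantees $c^T y^* \leq c^T x$, but since \Cref{conj:morellLowerBounds} is cost-oblivious, the oracle is free to push $f^{\mathcal{P}}$ within its $d_{\max}$-ball around $y^*$ in cost-increasing directions, so $c^T f^{\mathcal{P}}$ cannot be bounded from this single query. I expect the elementary proof to circumvent this by querying the oracle on two carefully chosen candidates $y_1, y_2 \in P_x$ arising from a parametric LP that couples cost and $\ell_\infty$-distance, with the property that the corresponding unsplittable flows $\mathcal{P}_1, \mathcal{P}_2$ straddle the target cost level $c^T x$. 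An exchange or convex-combination step then identifies one of the two -- or a deterministic choice between them -- whose cost is at most $c^T x$, while the $\ell_\infty$-bound of $2 d_{\max}$ is preserved automatically. Formulating the right coupled LP (and the right pair of perturbations) so that this exchange argument is both valid and elementary is, in my view, the crux of the proof.
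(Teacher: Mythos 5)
Your setup is exactly right: you correctly define the perturbation set $P_x = Q_G \cap (x-R)$ with $R = [-d_{\max},d_{\max}]^A$, observe that any $y \in P_x$ fed to the oracle of \Cref{conj:morellLowerBounds} yields an unsplittable flow within $2d_{\max}$ of $x$ by the triangle inequality, and identify the cost bound as the crux. But from there your analysis goes off the rails, and the gap is precisely where the paper's main idea lives.

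You write that the single query $y^* \in \argmin_{y\in P_x} c^T y$ ``cannot be bounded'' because the oracle may push $f^{\mathcal{P}}$ in cost-increasing directions, and you then speculate about a two-query parametric-LP exchange argument. In fact the single query \emph{does} suffice, and the reason is the \emph{face-preserving} property, which you never invoke. The oracle from \Cref{conj:morellLowerBounds} can be made face-preserving with respect to $Q_G$ by the simple trick of deleting all arcs $a$ with $y^*(a) = 0$ before calling it: the resulting unsplittable flow $z = f^{\mathcal{P}}$ then carries zero flow on every arc where $y^*$ does, so $z$ lies on the minimal face of $Q_G$ containing $y^*$ (the only inequalities defining $Q_G$ being nonnegativities). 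This is what allows the extrapolation $\bar{y} := y^* + \varepsilon(y^* - z) \in Q_G$ for some $\varepsilon>0$: every constraint tight at $y^*$ is tight at $z$, so one can move past $y^*$ away from $z$ and stay in $Q_G$. The point $\hat{y} = \tfrac{\varepsilon}{1+\varepsilon}x + \tfrac{1}{1+\varepsilon}\bar{y}$ then lies in $Q_G \cap (x-R)$ (convexity of $Q_G$ and of $R$, plus $\bar y - x \in -(1+\varepsilon)R$), so $c^T y^* \leq c^T\hat{y}$; rewriting $\hat{y} = y^* + \tfrac{\varepsilon}{1+\varepsilon}(x-z)$ gives $c^T z \leq c^T x$ directly. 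No second query, no exchange step, no parametric LP. Your proposed two-oracle-call route is not developed, is not obviously workable, and is not what the advertised elementary argument is; the missing ingredient is the face-preserving reduction together with the extrapolation point $\bar{y}$.
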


We note that it remains wide open whether Goemans' conjecture holds even with a weaker capacity violation of $O(d_{\max})$, and it seems fair to say that proving such a weaker form of Goemans' conjecture would already be considered a breakthrough in the field.
Hence, \Cref{thm:mainUnsplitReduction} shows in particular that even \Cref{conj:morellLowerBounds} implies this slightly weaker form of Goemans' conjecture, and is therefore a more ambitious statement to prove (if true).
Moreover, \Cref{thm:mainUnsplitReduction} also implies that the two conjectures of Morell and Skutella~\cite{morellSingleSourceUnsplittable2022} are almost equivalent, up to a violation of $2d_{\max}$ versus $d_{\max}$.

\medskip

We show \Cref{thm:mainUnsplitReduction} by building upon a technique of Linhares and Swamy~\cite{linharesApproximatingMincostChainconstrained2018}, originally introduced in the context of chain-constrained spanning trees.
This is a technique to transform certain types of cost-unaware approximation algorithms into ones with additional cost guarantees.
In particular, we provide a very simple and elementary way to analyze a generalized version of this technique, which can then be used to derive \Cref{thm:mainUnsplitReduction}.

To underline the generality of the approach, we present a second application, to the weighted ring loading problem, which is an unsplittable flow problem on a cycle.
More precisely, in the weighted ring loading problem, we are given an undirected cycle $C=(V,E)$ with source-sink pairs $s_i,t_i \in V$ together with a demand $d_i\in \mathbb{R}_{\geq 0}$, for $i\in [k]$. 
The task is to route each demand $d_i$ from $s_i$ to $t_i$ along one path $P_i\subseteq E$ of the two $s_i$-$t_i$ path along the cycle $C$, so that the maximum flow on any edge is as small as possible.
This problem has been extensively studied in the literature without costs.
We show that our technique does extend to this setting, and when applied to a recent result of Däubel~\cite{daubel2022improved}, which shows how to get a solution violating each capacity by at most $\frac{13}{10} d_{\max}$ if there is a fractional solution, we can obtain the following result with a cost guarantee.
\begin{theorem}\label{thm:mainRingLoading}
	Given a weighted ring loading instance with non-negative costs and a fractional solution $x$. There is a polynomial time algorithm that
	computes an unsplittable solution $\mathcal{P}=(P_i)_{i \in [k]}$, such that
	\begin{itemize}
	  \item $\displaystyle\sum_{i=1}^k c(P_i) \leq c^T x  $, and
		\item $\displaystyle\sum_{\substack{i\in [k]\colon e\in P_i}} d_i \leq x_e+ \frac{13}{5} \, d_{\max} $ for all $e\in E$.
	\end{itemize}
\end{theorem}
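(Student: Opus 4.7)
My plan is to derive \Cref{thm:mainRingLoading} as a direct application of the generalized Linhares-Swamy reduction developed to prove \Cref{thm:mainUnsplitReduction}, instantiated this time with Däubel's algorithm~\cite{daubel2022improved} as the cost-unaware subroutine. Däubel's result says that for any fractional ring-loading solution $y\in \mathbb{R}^E_{\geq 0}$ one can compute, in polynomial time, an unsplittable routing $\mathcal{P}$ with load at most $y_e + \tfrac{13}{10}d_{\max}$ on every edge $e$. Passing this algorithm through the reduction costs a factor-$2$ blow-up in the violation but adds a cost guarantee, which exactly produces $\tfrac{13}{5}d_{\max} = 2\cdot\tfrac{13}{10}d_{\max}$ together with $\sum_i c(P_i) \leq c^T x$, matching the statement.

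The first step is to put ring loading into the abstract template required by the reduction. Let $\mathcal{F}$ denote the polytope of edge-load vectors of fractional ring-loading solutions, i.e., convex combinations of the $2^k$ integer load vectors obtained by routing each demand $i$ along one of its two $s_i$-$t_i$ cycle-paths; the given $x$ lies in $\mathcal{F}$, and the cost $c$ is linear on $\mathcal{F}$. Däubel's theorem then provides a black-box map $\mathcal{A}\colon \mathcal{F} \to \{\text{unsplittable routings}\}$ whose output load satisfies $\mathcal{A}(y)_e \leq y_e + \tfrac{13}{10}d_{\max}$ for every $e \in E$. The only structural features the reduction for \Cref{thm:mainUnsplitReduction} uses are precisely these: a polytope of fractional solutions, a linear cost, and a cost-unaware black-box rounding; the acyclicity assumption and the flow-conservation description of $Q_G$ play no role in the reduction itself.

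The reduction then proceeds in a Lagrangian/parametric manner. For a scalar parameter $\lambda \geq 0$, one computes a min-cost point $y(\lambda)\in \mathcal{F}$ subject to $y_e \leq x_e + \lambda$ on every edge, and runs $\mathcal{A}$ on $y(\lambda)$. As $\lambda$ sweeps from $0$ upwards, $c^T y(\lambda)$ is non-increasing while the output load stays bounded by $x_e + \lambda + \tfrac{13}{10}d_{\max}$. A binary search over the polynomially many breakpoints of this parametric LP yields two values $\lambda_1 \leq \lambda_2 \leq \tfrac{13}{10}d_{\max}$ and corresponding unsplittable solutions $\mathcal{P}_1, \mathcal{P}_2$, together with a mixing weight $\mu\in [0,1]$, such that $\mu\,c^T f^{\mathcal{P}_1} + (1-\mu)\,c^T f^{\mathcal{P}_2} \leq c^T x$. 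Returning whichever of $\mathcal{P}_1, \mathcal{P}_2$ satisfies $c^T f^{\mathcal{P}} \leq c^T x$ meets the cost bound, and bounding its load by the worst of the two endpoints gives $x_e + \tfrac{13}{10}d_{\max} + \tfrac{13}{10}d_{\max} = x_e + \tfrac{13}{5}d_{\max}$; the doubling is exactly the price of committing to one endpoint rather than the convex combination.

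The principal obstacle I anticipate is verifying that Däubel's algorithm truly behaves as a black box compatible with the reduction: it must accept an arbitrary fractional point in $\mathcal{F}$ (in particular, the perturbed solutions $y(\lambda)$ produced inside the reduction, not only the original input $x$) and deliver its $\tfrac{13}{10}d_{\max}$ guarantee relative to that input. One must additionally check that the polynomial-sized parametric family and breakpoint structure required by the reduction are available in the ring-loading polytope $\mathcal{F}$ without invoking any SSUF-specific machinery. Once both points are confirmed, the elementary convex-combination analysis underpinning \Cref{thm:mainUnsplitReduction} carries over essentially verbatim and yields \Cref{thm:mainRingLoading}.
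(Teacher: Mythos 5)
Your plan is at the right level of abstraction --- apply the Linhares--Swamy-style reduction with Däubel's algorithm as the cost-oblivious black box and get a factor-$2$ blow-up --- but the proposal leaves a crucial gap and also misdescribes how the reduction works.

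The central missing idea is that the reduction requires a \emph{two-sided} error body, whereas Däubel only guarantees a one-sided bound. In the paper's framework the FPRA has error body $R$, the auxiliary LP constrains $y^*\in Q\cap(x-R)$, and the final guarantee is $z\in x+(R-R)$. For ring loading one needs $R=[-\alpha d_{\max},\alpha d_{\max}]^E$ to conclude $z_e\le x_e+2\alpha d_{\max}$; if $R$ were only bounded above (which is all Däubel's theorem gives you directly, namely $\load_e(\mathcal{P})\le y_e+\tfrac{13}{10}d_{\max}$ with no lower bound on the load), then $R-R=\mathbb{R}^E$ and you get no upper bound on the final load at all. The paper gets around this by first eliminating all parallel commodity pairs (shifting flow so one becomes unsplit and can be fixed, as in \cite{schrijver1999ring,skutella2016note}) so that every remaining pair crosses, and then using the identity $y_{\{s_i,s_{i+1}\}}+y_{\{t_i,t_{i+1}\}}=\sum_j d_j$ for all fractional and integral solutions. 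This identity forces Däubel's one-sided upper bound to automatically yield the matching lower bound $\load_e(\mathcal{P})\ge \overline{x}_e-\alpha d_{\max}$, which is what makes $R$ symmetric. Without this preprocessing-plus-conservation argument, the proof simply does not go through. Relatedly, the preprocessing is also what makes Däubel's rounding face-preserving: after eliminating unsplit and parallel commodities, every remaining commodity is split on both of its paths, so no edge has load zero, which means every rounding output trivially lies on the same face of $Q_C$ as the input; you do not address why Däubel's algorithm would be face-preserving otherwise. You also need the reduction from non-uniform to uniform capacities (\Cref{lem:nonuniform_as_hard_as_uniform} in the paper) to apply Däubel at all.

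Finally, your description of the reduction itself as a Lagrangian sweep over a parameter $\lambda$, a binary search over breakpoints, and a convex mixing of two solutions $\mathcal{P}_1,\mathcal{P}_2$ does not correspond to the paper's argument (and also would not establish the cost guarantee: Däubel's output has no cost control, so $\mu\,c^Tf^{\mathcal{P}_1}+(1-\mu)\,c^Tf^{\mathcal{P}_2}\le c^Tx$ has no basis). The actual reduction is a single-shot procedure (\Cref{algo:main}): solve $\min\{c^Ty : y\in Q\cap(x-R)\}$ once, apply the FPRA once to the optimizer $y^*$, and deduce $c^Tz\le c^Tx$ from the elementary convexity argument in \Cref{thm:roundingGuarantees}. \Cref{thm:mainRingLoading} then follows by applying \Cref{thm:ring_loading} (which packages the preprocessing and the FPRA construction) with $\alpha=\tfrac{13}{10}$ and $\lambda=1$.
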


\subsection{Further related work}

Recently, Majthoub Almoghrabi, Skutella, and Warode~\cite{majthoub2025integer} proved \Cref{conj:morellStrong} for multi-commodity instances on series-parallel networks. 
Traub, Vargas Koch, and Zenklusen~\cite{traubSingleSourceUnsplittableFlows2024} showed that \Cref{conj:costImplication} holds in the single source setting on planar graphs.
Previously, Skutella~\cite{skutellaApproximatingSingleSource2002} proved \Cref{conj:goemans} for the special case when the demands are all multiples of each other, and Morell and Skutella~\cite{morellSingleSourceUnsplittable2022}  proved \Cref{conj:morellStrong} (and thus \Cref{conj:morellLowerBounds}) for the same special case.

The unweighted, multi-commodity setting on outerplanar networks is a natural generalization of the ring loading problem. Here, Shapley and Shmoys~\cite{shapley2024small} give additive bounds (depending on the graph structure) on how much an unsplittable flow has to exceed a given fractional flow that satisfies all demands. Their result was improved by Alem\'an-Espinosa and Kumar~\cite{aleman2025unsplittable}, who provide a constant bound just depending on the best rounding algorithm of the ring loading problem.

Beside the mentioned recent progress, there has been extensive work on unsplittable flows in a variety of settings, including with different objective functions.
For more information, we refer to the survey of Kolliopoulos~\cite{kolliopoulos2007edge} and to the discussion and references in Grandoni, Mömke, and Wiese~\cite{grandoni2022ptas}.

\subsection{Organization of the paper}

We discuss our approach in \Cref{sec:approach} in a generic way, using the single-source unsplittable flow problem as a running example.
In \Cref{sec:tradeoff}, we will show how our approach can be parameterized to obtain a trade-off between cost and constraint/capacity violations.
Even though this parameterization includes our main unparameterized result as a special case, we first present the unparameterized version at the beginning of \Cref{sec:approach} to keep the presentation simple.
Finally, \Cref{sec:applications} discusses the application of our approach to the single-source unsplittable flow problem and the weighted ring loading problem, and proves in particular \Cref{thm:mainUnsplitReduction} and \Cref{thm:mainRingLoading}.
 \section{Our approach}
\label{sec:approach}

We follow the approach of Linhares and Swamy~\cite{linharesApproximatingMincostChainconstrained2018}, extend it to our needs and present a simplified analysis of it, which, we believe is of independent interest.
Because our simplified analysis applies to a large set of problems, similarly to \cite{linharesApproximatingMincostChainconstrained2018}, we first describe the type of problem settings that we can address with it in a general way, using unsplittable flows only as an example.

Let $N$ be a finite set, and let $Q \subseteq \mathbb{R}^N$ be a polyhedron, which can be thought of as a (linear) relaxation of the problem we want to solve.
The actual solutions we want to optimize over are only a subset $Z\subseteq Q$ of $Q$.
In combinatorial optimization, $Q$ is often the convex hull of the incidence/characteristic vectors of feasible solutions, and $Z$ are the vertices of $Q$, which therefore naturally correspond to the feasible solutions.
This is also the setting in which the results of \cite{linharesApproximatingMincostChainconstrained2018} have been presented.
However, interestingly, for the unsplittable flow setting that we consider, we want to deviate from this classical viewpoint, and observe that the approach presented in \cite{linharesApproximatingMincostChainconstrained2018} extends to this broader setting where $Z$ need not be the vertices of $Q$.
More precisely, when given a weighted SSUF instance $G=(V,A,s,T,d,c)$, we will use $Q=Q_G$ (i.e., the set of all fractional arc flows satisfying the demands), and consider as solutions $Z$ the set
\begin{equation}\label{eq:defZForUnsplit}
  Z_G \coloneqq \left\{ f^{\mathcal{P}} \colon \mathcal{P}=(P_t)_{t\in T} \text{ is an unsplittable flow in $G$}\right\}.
\end{equation}
Note that therefore, for unsplittable flows, $Z_G$ are in general not the vertices of $Q_G$.\footnote{For a small example illustrating this, consider a graph $G=(V,A)$ with $V=\{s,t\}$ and two parallel arcs from $s$ to $t$.
Let there be two commodities, each with a demand of $1$, with source $s$ and sink $t$.
Then $Q_G=\{(x_1,x_2) \in \mathbb{R}_{\geq 0}^A \colon x_1+x_2=2\}$.
However, the unsplittable flow that sends a flow of one along each arc yields the vector $(1,1) \in Z_G$, which is not an extreme point of $Q_G$.}

Moreover, we need what Linhares and Swamy~\cite{linharesApproximatingMincostChainconstrained2018} call a \emph{face-preserving rounding algorithm} (FPRA).
An FPRA is a procedure that, given a point $x\in Q$, returns a solution lying on the minimal face of $Q$ containing $x$.
We use the following extended definition of FPRA, which is explicit about the sets $Q$ and $Z$, and captures its guarantees through what we call an error body, which generalizes the rounding errors considered in \cite{linharesApproximatingMincostChainconstrained2018}.
\begin{definition}[$(Q,Z)$-face-preserving rounding algorithm ($(Q,Z)$-FPRA) with error body $R$]
Let $Q\subseteq \mathbb{R}^N$ be a polyhedron (called \emph{relaxation}), let $Z\subseteq Q$ (called \emph{feasible solutions}), and let $R\subseteq \mathbb{R}^N$ (\emph{error body}) be a convex set containing the origin.
\emph{A $(Q,Z)$-face-preserving rounding algorithm (FPRA) with error body $R$} takes as input a point $x\in Q$ and returns a solution $z\in Z$ with $z-x\in R$ that lies on the minimal face of $Q$ containing $x$.
\end{definition}

Note that for a $(Q,Z)$-FPRA to exist, we need that $Z$ contains in particular all vertices of $Q$.
Indeed, if $x$ is a vertex of $Q$, then the minimal face of $Q$ containing $x$ is just the point $x$ itself, and thus the FPRA must return $z=x$.
We also highlight that the error body $R$ need not be symmetric.

Being an FPRA is a natural property of many rounding procedures.
Moreover, a key observation is that in the single-source unsplittable flow setting, for many of its variants, one can easily transform any rounding procedure into an FPRA due to the following.
Consider, for example, an algorithm resolving \Cref{conj:morellLowerBounds}.
Such an algorithm, which takes a point $x\in Q_G$ and returns an unsplittable flow $\mathcal{P}$ fulfilling the criteria of the conjecture, can be interpreted as a rounding procedure that rounds $x$ to $f^{\mathcal{P}}$.\footnote{Note that it is important that this procedure does not only return $f^{\mathcal{P}}$, which is a point in $Z$, but also the actual path flows given by $\mathcal{P}=(P_t)_{t\in T}$, as they are the unsplittable flow.
However, throughout this paper we will not always repeat this fact and think of points in $Z$ as actual solutions.}
To turn this (rounding) algorithm into a $(Q_G,Z_G)$-FPRA with error body $R=[-d_{\max}, d_{\max}]^A$, we simply first delete all arcs $a$ from $G$ with $x(a)=0$ before applying the algorithm.
This guarantees that we obtain an unsplittable flow $\mathcal{P}$ that does not use any arc with zero $x$-value, which implies that $f^{\mathcal{P}}$ is on the minimal face of $Q_G$ on which $x$ lies.
Note that this crucially exploits that the faces of $Q_G$ are defined by tight nonnegativity constraints, as these are the only inequality constraints of $Q_G$.

The idea is to transform a $(Q,Z)$-FPRA with error body $R$, which is an algorithm oblivious to potential costs of the underlying problem, to an algorithm with cost guarantees.
Hence, we assume that some linear costs $c\colon Q\to \mathbb{R}$ are given that we seek to minimize.
The algorithm we use to obtain cost guarantees from a $(Q,Z)$-FPRA with error body $R$ is presented in~\Cref{algo:main}, where $x-R$ denotes the Minkowski sum of $\{x\}$ and $-R$, i.e., $x-R=\{x-r \colon r\in R\}$.

\begin{algorithm2e}[ht]
  \caption{From a $(Q,Z)$-FPRA with error body $R$ to cost-aware rounding.}\label{algo:main}
  Find an optimal solution $y^*$ to $\argmin\{c^T y \colon y \in Q \cap (x - R) \}$.\label{algline:solveAuxProb}

  Apply the $(Q,Z)$-FPRA to $y^*$ to obtain a solution $z$.\label{algline:applyFPRA}

  \Return $z$.
\end{algorithm2e}

Note that the optimization problem to be solved in line~\ref{algline:solveAuxProb} of \Cref{algo:main} asks to optimize a linear function over a convex set.
To keep the framework general, we have, in our definition of the error body, only very light assumptions on its structure, namely that it is convex and contains the origin.
Depending on the precise definition, the optimization problem in line~\ref{algline:solveAuxProb} can be more or less complex, or the minimum may not even be attained.
However, if $R$ is a polyhedron, which will be the case in our later applications, then it is even a linear optimization problem.
Additionally, as $R$ contains the origin, the optimization problem in line~\ref{algline:solveAuxProb} of \Cref{algline:solveAuxProb} is always feasible, because $x$ is feasible.
For simplicity of presentation, in what follows, when making statements about the output of \Cref{algo:main}, we always assume that the optimum of the optimization problem in line~\ref{algline:solveAuxProb} is attained.
The results and analysis naturally extend to settings where only an approximate solution is found.

\begin{theorem}\label{thm:roundingGuarantees}
Let $N$ be a finite set, $Q\subseteq \mathbb{R}^N$ a polyhedron, $Z\subseteq Q$, and let $R\subseteq \mathbb{R}^N$ be a convex set containing the origin.
Given $x \in Q$, then using \Cref{algo:main} with a $(Q,Z)$-FPRA with error body $R$ returns a solution $z\in Z$ satisfying
  \begin{itemize}
    \item $c^T z \leq c^T x$, and
    \item $z \in x + (R - R)$.
  \end{itemize}
\end{theorem}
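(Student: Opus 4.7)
My plan is to exploit the optimality of $y^* = \argmin\{c^T y : y \in Q \cap (x-R)\}$ from line~\ref{algline:solveAuxProb} together with the face-preserving property of the FPRA, by exhibiting $d := x - z$ as a feasible direction at $y^*$ for that problem. Write $r := z - y^* \in R$ (by the FPRA guarantee) and $s := x - y^* \in R$ (because $y^*$ is feasible in line~\ref{algline:solveAuxProb}). The containment $z \in x + (R-R)$ of the second bullet is then immediate, since $z - x = r - s \in R - R$.

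For the cost bound, I would show that $y^* + \epsilon d \in Q \cap (x-R)$ for all sufficiently small $\epsilon > 0$. Once that holds, optimality of $y^*$ gives $c^T y^* \le c^T(y^* + \epsilon d) = c^T y^* + \epsilon\, c^T(x-z)$, and dividing by $\epsilon > 0$ yields $c^T(x-z) \ge 0$, i.e., $c^T z \le c^T x$.

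The $(x-R)$-membership uses only convexity of $R$:
\[
x - (y^* + \epsilon d) \;=\; s - \epsilon(x - z) \;=\; (1-\epsilon)\, s + \epsilon\, r,
\]
a convex combination of two elements of $R$ that therefore lies in $R$ for every $\epsilon \in [0,1]$. The $Q$-membership is where the face-preserving property enters: since $z$ lies on the minimal face $F$ of $Q$ containing $y^*$ and $y^*$ is in the relative interior of $F$, the point $\tilde{y}_\alpha := y^* + \alpha(y^* - z)$ still lies in $F \subseteq Q$ for all sufficiently small $\alpha > 0$. A short algebraic check shows that with the choice $\alpha := \epsilon/(1-\epsilon)$,
\[
y^* + \epsilon d \;=\; (1-\epsilon)\, \tilde{y}_\alpha + \epsilon\, x,
\]
which is a convex combination of $\tilde{y}_\alpha \in Q$ and $x \in Q$, hence in $Q$.

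I view this $Q$-side decomposition as the crux of the proof: the intuition is that the face-preserving guarantee allows us to move from $y^*$ slightly in the direction $y^* - z$ (away from $z$) while staying in $Q$, and such a push can then be combined convexly with $x$ to land exactly on $y^* + \epsilon(x-z)$. Once this is established, the rest — convexity of $R$ for the $(x-R)$-side, together with optimality of $y^*$ applied to the feasible direction $d$ — is immediate.
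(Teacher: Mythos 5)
Your proof is correct and follows essentially the same route as the paper: your point $y^* + \epsilon(x-z)$ is exactly the paper's $\hat y = y^* + \frac{\varepsilon'}{1+\varepsilon'}(x-z)$ (with $\alpha = \epsilon/(1-\epsilon)$ playing the role of the paper's $\varepsilon'$), your $\tilde y_\alpha$ is the paper's $\overline y$, and both arguments verify membership in $Q$ by convexity with $x$ and membership in $x-R$ by convexity of $R$ before invoking optimality of $y^*$. The "feasible direction at $y^*$" framing you use is a pleasant way to motivate the same computation.
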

We recall that $x + (R - R)$ is the Minkowski sum of $\{x\}$, $R$, and $-R$.
Hence, $x + (R-R) = \{x + r_1 - r_2\colon r_1, r_2 \in R\}$.
\begin{proof} First, since $y^*-x\in-R$ and $z-y^*\in R$, we immediately obtain that $z=x+(z-y^*)+(y^*-x)\in x+(R-R)$. 

To obtain the cost bound, we crucially utilize that we have an FPRA. A key observation is that since $z$ lies on the minimal face of $Q$ that contains $y^*$, there exists some $\varepsilon > 0$ with
\begin{equation*}
  \overline{y} \coloneqq y^* + \epsilon\cdot (y^* - z) \in Q.
\end{equation*}
This follows because every constraint of $Q$ that is tight at $y^*$ is also tight at $z$. 
Note that because $z$ is the output of a $(Q,Z)$-FPRA with error body $R$ applied to $y^*$, we have
\begin{equation*}
    \overline{y}-y^*=\epsilon\cdot(y^*-z)\in-\epsilon R.
\end{equation*}
Thus,
\begin{equation}\label{eq:yHatWRTX}
 \overline{y}-x=(\overline{y}-y^*)+(y^*-x)\in -(1+\varepsilon) R,
\end{equation}
where we used that $R$ is convex, and so $-\alpha R - \beta R = -(\alpha+\beta)R$ for any $\alpha, \beta \in \mathbb{R}_{\geq 0}$. 

Now, we will use $\overline{y}$ and $x$ to produce a point $\hat y$ lying on the line segment joining $\overline{y}$ and $x$ that also lies in $Q\cap (x-R)$. This will imply that $c^Ty^*\leq c^T\hat y$, and the way in which $\hat y$ is constructed will then yield that $c^Tz\leq c^Tx$. 
Consider the following point (see \Cref{fig:sketch}):
\begin{equation}\label{eq:defYHat}
  \hat{y} = \frac{\varepsilon}{1+\varepsilon} x + \frac{1}{1+\varepsilon} \overline{y},
\end{equation}
which, by using the definition of $\overline{y}$, can be written equivalently as
\begin{equation}\label{eq:hatYAlt}
  \hat{y} = x + \frac{1}{1+\varepsilon} \left(\overline{y} - x\right)
  = y^* + \frac{\varepsilon}{1+\varepsilon} (x - z).
\end{equation}
Because, by \Cref{eq:defYHat}, $\hat{y}$ is a convex combination of $x$ and $\overline{y}$, which both lie in $Q$, we have $\hat{y}\in Q$ by convexity of $Q$.
Moreover, using \Cref{eq:yHatWRTX}, we obtain $\hat y\in x-R$.

Thus, $\hat{y}$ is a feasible solution to the optimization problem in \Cref{algline:solveAuxProb}, which implies $c^T y^* \leq c^T \hat{y}$.
This, in turn, implies the desired cost guarantee $c^T z \leq c^T x$ by \Cref{eq:hatYAlt}.
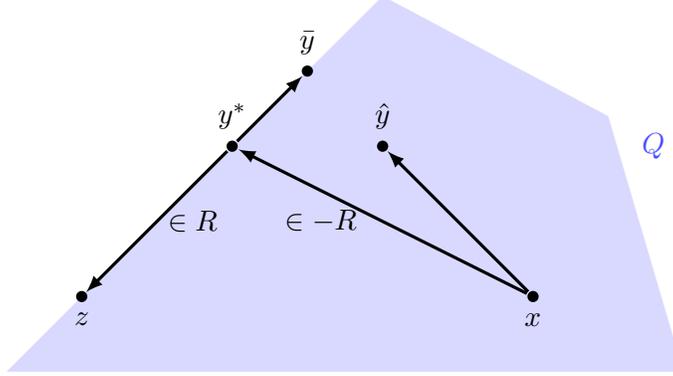
\begin{figure}
    \centering
    
\begin{tikzpicture}[scale=2]

\fill[blue, opacity=0.15] (-3.5,0) -- (-1,2.5) -- (0.5,1.7) --(1,0) -- cycle;
\node[opacity=0.7] (q) at (0.8,1.5) {\textcolor{blue}{$Q$}};

\node[fill=black, circle, inner sep=1.5pt, outer sep=0.5pt,label=below:$x$] (x) at (0,0.5) {};
   \node[fill=black, circle, inner sep=1.5pt, outer sep=0.5pt, label=above:$y^*$] (ystar) at (-2,1.5) {};
   \node[fill=black, circle, inner sep=1.5pt, label=below:$z$] (z) at (-3,0.5) {};
    \node[fill=black, circle, inner sep=1.5pt,  outer sep=0.5pt, label=above:$\bar y$] (ybar) at (-1.5,2) {};
     \node[fill=black, circle, inner sep=1.5pt,  outer sep=0.5pt, label=above:$\hat y$] (yhat) at (-1,1.5) {};

\draw[->,>=latex, very thick] (x) to node[left] {$\in -R \,\,\,$ } (ystar);
    \draw[->,>=latex, very thick] (ystar) to node[above, right] {$\in R$} (z);
    \draw[->,>=latex, very thick] (ystar) to  (ybar);
    \draw[->,>=latex, very thick] (x) to (yhat);

\node(shift) at (0,-0.1) {};
\end{tikzpicture}
     \caption{The relation of vectors and points that appear in the proof of \Cref{thm:roundingGuarantees}.
    We choose $\hat y \coloneqq \frac{\varepsilon}{1+\varepsilon}x + \frac{1}{1+\varepsilon}\bar{y}$ as a convex combination of $\bar y$ and $x$.
    }
    \label{fig:sketch}
\end{figure}
\end{proof}

We briefly justify our earlier comment, that our setup with an explicit error body $R$ generalizes the types of rounding errors considered in \cite{linharesApproximatingMincostChainconstrained2018}. They consider the problem of finding a minimum-cost vertex of a polytope $Q$ that satisfies additional side constraints $Ax\leq b$, and Theorem 5.2 in their paper shows that if one has an FPRA that given $x\in Q$ outputs $z$ (on the minimal face of $Q$ containing $x$) such that $Ax-\Delta\leq Az\leq Ax+\Delta$, then one can obtain an extreme point $\hat x$ of $Q$ of cost at most the optimum, satisfying $A\hat x\leq b+2\Delta$. In our language, this corresponds to having access to a $(Q,Z)$-FPRA with error body $R$, where $Z$ is the set of vertices of $Q$ and $R=\{x \colon -\Delta\leq Ax\leq\Delta\}$. Then, \Cref{thm:roundingGuarantees} can be used to obtain the same guarantee by taking $x$ to be an optimal solution to $\min\{c^Ty \colon y\in Q,\ Ax\leq b\}$. We note that when $R$ is polyhedral, the LP-based approach in \cite{linharesApproximatingMincostChainconstrained2018} can also be utilized to obtain \Cref{thm:roundingGuarantees}; our setup allows one to consider non-polyhedral error bodies as well (e.g., norm balls), and our proof is different and more elementary.

\subsection{Obtaining cost-violation tradeoffs}\label{sec:tradeoff}

We now show that we can achieve a tradeoff between the cost of the rounding result~$z$ and the rounding error if, additionally to the requirements so far, $Q\subseteq \mathbb{R}_{\geq 0}^N$ is in the nonnegative orthant and $c\in \mathbb{R}^N_{\geq 0}$ is also nonnegative.
To this end, we modify \Cref{algo:main} so that instead of optimizing over $Q \cap (x-R)$ we optimize over $Q \cap (x-\lambda R)$ at the start of the algorithm, where $\lambda\in (0,1]$ is some parameter controlling the tradeoff.
(See \Cref{algo:modified} below for the modified algorithm.)

\begin{algorithm2e}[ht]
  \caption{From a $(Q,Z)$-FPRA with error body $R$ to a cost-violation tradeoff rounding algorithm with parameter $\lambda \in (0,1]$. (Recall that we assume $Q\subseteq \mathbb{R}^N_{\geq 0}$ and $c\in \mathbb{R}^N_{\geq 0}$.)}\label{algo:modified}
  Find an optimal solution $y^*$ to $\argmin\{c^T y \colon y \in Q \cap (x - \lambda R) \}$.\label{line:opt_modified}

  Apply the $(Q,Z)$-FPRA to $y^*$ to obtain a solution $z$.

  \Return $z$.
\end{algorithm2e}

Using an analysis analogous to the proof of \Cref{thm:roundingGuarantees}, we obtain the following tradeoff, where the case $\lambda = 1$ corresponds to \Cref{thm:roundingGuarantees}.
Analogous to \Cref{algo:main}, we assume in the following that the optimum of the optimization problem in line~\ref{line:opt_modified} of \Cref{algo:modified} is always attained.

\begin{corollary}
\label{cor:tradeoff}
Let $N$ be a finite set, $c:N\to \mathbb{R}_{\geq 0}$, $\lambda\in (0,1]$, $Q\subseteq \mathbb{R}_{\geq 0}^N$ a polyhedron, $Z\subseteq Q$, and let $R\subseteq \mathbb{R}^N$ be a convex set containing the origin.
Using \Cref{algo:modified} with a $(Q,Z)$-FPRA with error body $R$ returns a solution $z\in Z$ satisfying
    \begin{itemize}
        \item $c^T z \leq \frac{1}{\lambda} \cdot c^Tx$, and
        \item $z \in x +(R - \lambda R)$.
    \end{itemize}
\end{corollary}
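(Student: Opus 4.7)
The plan is to mirror the proof of \Cref{thm:roundingGuarantees} almost line by line, but with the crucial twist that the auxiliary point inside $Q$ used to certify the optimality of $y^*$ should now lie in $Q \cap (x - \lambda R)$ rather than $Q \cap (x - R)$. To achieve this, I will take a convex combination of $x$ and an analogue of $\overline{y}$, with weights tuned to $\lambda$ instead of to $1$. The resulting inequality will differ from the original one by a slack term $(1-\lambda) c^T y^*$, which I will absorb by invoking, at the very end, the nonnegativity of $c$ and of $y^*\in Q \subseteq \mathbb{R}_{\geq 0}^N$. This is the only place where the new assumptions of the corollary enter the argument.

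Concretely, the containment bound follows immediately: since $y^* \in x-\lambda R$ and the $(Q,Z)$-FPRA guarantees $z - y^* \in R$, we obtain $z - x \in R - \lambda R$. For the cost bound, I would then use the face-preserving property exactly as before to obtain some $\varepsilon > 0$ with $\overline{y} \coloneqq y^* + \varepsilon (y^* - z) \in Q$, and then combine $\overline{y} - y^* \in -\varepsilon R$ with $y^* - x \in -\lambda R$ and convexity of $R$ to conclude
\begin{equation*}
    \overline{y} - x \in -(\varepsilon+\lambda) R.
\end{equation*}

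The key step is to define
\begin{equation*}
    \hat{y} \coloneqq \frac{\varepsilon}{\varepsilon+\lambda}\, x + \frac{\lambda}{\varepsilon+\lambda}\, \overline{y},
\end{equation*}
which is a convex combination of $x, \overline{y} \in Q$ and hence lies in $Q$, and satisfies $\hat{y}-x = \frac{\lambda}{\varepsilon+\lambda}(\overline{y}-x) \in -\lambda R$ by convexity. Thus $\hat{y}$ is feasible for the optimization problem in \Cref{line:opt_modified}, so $c^T y^* \le c^T \hat{y}$. Substituting $\overline{y} = (1+\varepsilon)y^* - \varepsilon z$, I expect a short manipulation to reduce this inequality to
\begin{equation*}
    \lambda\, c^T z \leq c^T x - (1-\lambda)\, c^T y^*.
\end{equation*}
Since $\lambda \in (0,1]$ and $c^T y^* \geq 0$, the slack term $(1-\lambda) c^T y^*$ is nonnegative and can be dropped, yielding $c^T z \leq \frac{1}{\lambda}\, c^T x$.

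The main obstacle is choosing the weights of the convex combination correctly: the coefficient of $\overline{y}$ must be scaled by $\lambda$ (not by $1$, as in \Cref{thm:roundingGuarantees}) so that $\hat{y}\in x-\lambda R$, and this rescaling is precisely what prevents the two $c^T y^*$ terms from cancelling cleanly, forcing us to invoke nonnegativity. Once the algebraic form of $\hat{y}$ is set up, the remaining steps are routine, and setting $\lambda = 1$ recovers \Cref{thm:roundingGuarantees} as a special case.
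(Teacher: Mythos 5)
Your proof is correct and follows essentially the same route as the paper: identical choice of $\overline{y}$, identical convex combination $\hat{y}=\tfrac{\varepsilon}{\varepsilon+\lambda}x+\tfrac{\lambda}{\varepsilon+\lambda}\overline{y}$, and the same final appeal to $\lambda\in(0,1]$ and $c^Ty^*\ge0$ to drop the slack term. The only difference is cosmetic: you isolate the inequality $\lambda c^Tz \le c^Tx-(1-\lambda)c^Ty^*$ before dropping the slack, while the paper carries out the same algebra as a single chain of inequalities.
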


\begin{proof}
We follow the lines of the proof of \Cref{thm:roundingGuarantees} with an adapted definition of the point $\hat y$.

As $y^*$ lies on a minimal face of $Q$, there is an $\epsilon >0$ such that 
\[
\bar y \coloneqq y^* + \epsilon (y^* -z) \in Q.
\]
This time we have
\begin{equation}\label{eq:yHatWRTXLambda}
\bar{y} \in x - (\lambda+\epsilon) R,
\end{equation}
which follows from $y^*\in x - \lambda R$, which is an immediate consequence of the definition of $y^*$, and $y^* - z \in -R$, which holds because $z$ is the output of the FPRA with error body $R$.

Now, define 
\[
\hat y \coloneqq \frac{\epsilon}{\lambda+\varepsilon} x + \frac{\lambda}{\lambda + \epsilon} \bar y.
\]
Note that $\hat y \in Q$ because it is a convex combination of points in the convex set $Q$.
Moreover, by \Cref{eq:yHatWRTXLambda}, we have $\hat{y}\in x-\lambda R$.
Hence, $\hat y$ is a feasible solution to the optimization problem in \Cref{line:opt_modified}, and thus $c^T y^* \leq c^T \hat y$.
Finally, some basic computations yield the desired cost guarantee:
\begin{align*}
0 &\leq \frac{\epsilon}{\lambda + \epsilon} \cdot c^T x + \frac{\lambda}{\epsilon + \lambda} c^T\bar y - c^Ty^*\\
&= \frac{\epsilon}{\lambda + \epsilon} \cdot c^Tx + \left( \frac{\lambda}{\lambda+\epsilon} (1+\epsilon) -1\right)c^Ty^* - \frac{\epsilon \lambda}{\lambda+\epsilon} c^Tz\\
&= \frac{1}{\lambda+\varepsilon}\left( \epsilon c^Tx -\underbrace{\epsilon(1-\lambda)}_{\substack{\geq 0 }} \underbrace{c^T y^*}_{\geq 0} - \epsilon \lambda c^T z\right)\\
&\leq \frac{\varepsilon \lambda}{\lambda+\varepsilon}\left( \frac{1}{\lambda} c^Tx  - c^T z \right),
\end{align*}
where the first inequality uses $c^T y^* \leq c^T \hat y$,
the first equality uses the definition of $\bar{y}$,
and the last inequality holds because $\lambda \in (0,1]$ and $c^T y^* \geq 0$ (which holds because $c\in \mathbb{R}^N_{\geq 0}$ and $y^*\in Q \subseteq \mathbb{R}_{\geq 0}$).

\end{proof}
 \section{Applications}\label{sec:applications}

In this section we apply the approach presented in \Cref{sec:approach} to single-source unsplittable flows and the ring loading problem. 
\subsection{Implications for single-source unsplittable flows}

To apply \Cref{thm:roundingGuarantees} to the setting of unsplittable flows, 
we consider the polytope $Q_G$ as defined in \Cref{sec:intro} with $Z_G \subseteq Q_G$ containing all unsplittable flows as defined in~\Cref{eq:defZForUnsplit}. The following proposition follows directly from the fact that non-flow carrying arcs can be deleted before applying  \Cref{conj:morellLowerBounds} as expanded on already in \Cref{sec:approach}.

\begin{proposition}
\label{prop:FPRA}
If \Cref{conj:morellLowerBounds} holds, there is a polynomial time
$(Q_G,Z_G)$-FPRA with error body $R$, where
  \begin{equation*}
    R = [-d_{\max}, d_{\max}]^A.
  \end{equation*}
  Moreover, the algorithm returns an actual unsplittable path flow $\mathcal{P}= (P_t)_{t\in T}$.
\end{proposition}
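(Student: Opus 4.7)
The plan is to formalize the sketch given in \Cref{sec:approach}: given $x \in Q_G$, I delete all arcs of $G$ that do not carry $x$-flow and then invoke the algorithm promised by \Cref{conj:morellLowerBounds} on the reduced instance.

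Concretely, I let $A' \coloneqq \{a \in A : x(a) > 0\}$ and consider the subgraph $G' = (V, A', s, T, d)$. Since $G$ is acyclic, so is $G'$. Moreover, because deleted arcs contribute zero on both sides of every flow-conservation equation defining $Q_G$, the restriction $x|_{A'}$ lies in $Q_{G'}$. I then apply the polynomial-time algorithm guaranteed by \Cref{conj:morellLowerBounds} to the pair $(G', x|_{A'})$, obtaining an unsplittable flow $\mathcal{P} = (P_t)_{t \in T}$ in $G'$ satisfying $|f^{\mathcal{P}}(a) - x(a)| \leq d_{\max}$ for every $a \in A'$. Extending $f^{\mathcal{P}}$ by zero on $A \setminus A'$ yields an unsplittable path flow in $G$ that, by construction, uses only arcs of $A'$.

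It then remains to verify the three properties required of a $(Q_G, Z_G)$-FPRA with error body $R = [-d_{\max}, d_{\max}]^A$. Membership $f^{\mathcal{P}} \in Z_G$ is immediate because every $P_t$ is, in particular, an $s$-$t$ path in $G$. The error bound $f^{\mathcal{P}} - x \in R$ holds coordinate-wise: on $A'$ it is the bound inherited from \Cref{conj:morellLowerBounds}, and on $A \setminus A'$ both $x(a)$ and $f^{\mathcal{P}}(a)$ equal $0$. Finally, for face preservation, I use that the only inequality constraints in the description of $Q_G$ are the nonnegativity constraints. Hence the minimal face of $Q_G$ containing $x$ is exactly $\{y \in Q_G : y(a) = 0 \text{ for all } a \in A \setminus A'\}$, and $f^{\mathcal{P}}$ lies on this face since it vanishes on $A \setminus A'$ by construction.

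I do not anticipate a genuine obstacle; the argument is essentially routine. The only slightly delicate ingredient is the face-preservation property, which comes for free precisely because $Q_G$ has no inequality constraints beyond nonnegativity, so that zero-arc deletion exactly characterizes the minimal face. The acyclicity hypothesis required by \Cref{conj:morellLowerBounds} is preserved under arc deletion, and in the applications of \Cref{prop:FPRA} within this paper we always start from an acyclic graph, as noted after \Cref{conj:goemans}.
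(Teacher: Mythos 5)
Your proof is correct and follows essentially the same approach the paper outlines (deleting zero-flow arcs before applying the \Cref{conj:morellLowerBounds} algorithm, then using that $Q_G$'s only inequality constraints are nonnegativity to get face preservation). One minor imprecision: your closing pointer that acyclicity ``is noted after \Cref{conj:goemans}'' conflates two things — the remark after \Cref{conj:goemans} explains that acyclicity is WLOG for Goemans' conjecture, whereas for \Cref{conj:morellLowerBounds} the paper explicitly emphasizes that acyclicity must be \emph{imposed}; in the applications here (\Cref{conj:costImplication}), the instance is assumed acyclic, so your argument still goes through.
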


\Cref{prop:FPRA} and \Cref{thm:roundingGuarantees} directly imply \Cref{thm:mainUnsplitReduction}.
\begin{proof}[Proof of \Cref{thm:mainUnsplitReduction}]
By \Cref{thm:roundingGuarantees} and \Cref{prop:FPRA}, we have that \Cref{algo:main} returns an unsplittable flow $z$ of cost at most $c^T z \leq c^T x$ and satisfying $z \in x + (R - R) = x + [-2 d_{\max}, 2 d_{\max}]$, as desired.
\end{proof}

Moreover, \Cref{prop:FPRA} together with \Cref{cor:tradeoff} directly imply the following corollary. 

\begin{corollary}
    Assuming \Cref{conj:morellLowerBounds}, given a fractional flow $x$ and some $\lambda \in (0,1]$, then there is a polynomial time algorithm which computes an unsplittable flow $\mathcal{P}= (P_t)_{t\in T}$ with
    \begin{itemize}
        \item $c^T f ^\mathcal{P}\leq \frac{1}{\lambda} c^T x$, and
        \item $x(a) - (1+\lambda) d_{\max} \leq f^\mathcal{P}(a) \leq x(a) + (1+\lambda) d_{\max} $ for all $a \in A$.
    \end{itemize}
\end{corollary}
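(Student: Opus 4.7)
The plan is to obtain the corollary by combining the two ingredients already available in the excerpt: \Cref{prop:FPRA} provides the required FPRA under the assumption of \Cref{conj:morellLowerBounds}, and \Cref{cor:tradeoff} provides the cost/error trade-off for an arbitrary $\lambda\in(0,1]$. Concretely, I would first invoke \Cref{prop:FPRA} to obtain a polynomial-time $(Q_G,Z_G)$-FPRA with error body $R=[-d_{\max},d_{\max}]^A$, which outputs an actual unsplittable path flow $\mathcal{P}=(P_t)_{t\in T}$ rather than just a vector $f^{\mathcal{P}}\in Z_G$.

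Next, I would feed this FPRA together with the given $\lambda\in(0,1]$ into \Cref{algo:modified}, and apply \Cref{cor:tradeoff}. Note that the hypotheses of the corollary are met since $Q_G\subseteq\mathbb{R}_{\geq 0}^A$ and $c\in\mathbb{R}_{\geq 0}^A$, and the optimization problem in \Cref{line:opt_modified} is a linear program over the polyhedron $Q_G\cap(x-\lambda R)$, so it can be solved in polynomial time. The conclusion of \Cref{cor:tradeoff} gives an unsplittable flow $\mathcal{P}$ with $c^T f^{\mathcal{P}}\le \tfrac{1}{\lambda}\, c^T x$ and $f^{\mathcal{P}}\in x+(R-\lambda R)$.

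The only remaining step is a short Minkowski arithmetic computation to translate the containment $f^{\mathcal{P}}\in x+(R-\lambda R)$ into the desired arc-wise bounds. Since $R=[-d_{\max},d_{\max}]^A$ is a symmetric box, $-\lambda R=\lambda R=[-\lambda d_{\max},\lambda d_{\max}]^A$, and therefore
\begin{equation*}
R-\lambda R \;=\; [-(1+\lambda)d_{\max},\,(1+\lambda)d_{\max}]^A.
\end{equation*}
Adding $x$ componentwise then yields exactly $x(a)-(1+\lambda)d_{\max}\le f^{\mathcal{P}}(a)\le x(a)+(1+\lambda)d_{\max}$ for every $a\in A$.

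There is no real obstacle here: the corollary is essentially the direct instantiation of the general trade-off (\Cref{cor:tradeoff}) at the specific error body provided by \Cref{prop:FPRA}. The only thing to be a little careful about is that \Cref{cor:tradeoff} states the error guarantee in Minkowski-sum form, and one must observe that for this symmetric box $R$ the Minkowski difference $R-\lambda R$ collapses to a single box of radius $(1+\lambda)d_{\max}$, which is precisely what gives the additive factor advertised in the corollary.
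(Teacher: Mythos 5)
Your proposal is correct and matches the paper's approach exactly: the paper also proves this corollary by combining \Cref{prop:FPRA} with \Cref{cor:tradeoff}, and your explicit Minkowski computation $R-\lambda R = [-(1+\lambda)d_{\max},(1+\lambda)d_{\max}]^A$ for the symmetric box $R$ is the right (and only) arithmetic needed to translate the containment $f^{\mathcal{P}}\in x+(R-\lambda R)$ into the stated arc-wise bounds.
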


\subsection{Implications for the ring loading problem}

The ring loading problem is a multi-commodity unsplittable flow problem on a cycle. 
A ring loading network $(C=(V,E), (\{s_i,t_i\})_{i \in [k]}, (d_i)_{i \in [k]})$
consists of an undirected cycle $C=(V,E)$, together with $k$ commodities, each consisting of a source-sink pair $\{s_i,t_i\} \in \big(\begin{smallmatrix}V \\ 2\end{smallmatrix}\big)$ and a demand $d_i\in \mathbb{R}_{\geq 0}$ for $i \in [k]$. 
A fractional solution can split the demand of a commodity arbitrarily among the two $s_i$-$t_i$ paths in the cycle, while an unsplittable solution chooses one of these two paths for each commodity and routes the whole demand of that commodity along the chosen path.

For $i\in [k]$, we denote by $P_i^1\subseteq E$ and $P_i^2\subseteq E$ the two different $s_i$-$t_i$ paths in the cycle $C$.
For a fractional solution to the ring loading problem we consider the load vector
\begin{equation*}
	Q_C \coloneqq \left\{x \in \mathbb{R}_{\geq 0}^E \colon \exists \lambda_i\in [0,1] \text{ for } i\in [k] \text{ s.t. } x(e) = \sum_{\substack{i \in [k]:\\ e \in P_i^1}} \lambda_i d_i  + \sum_{\substack{i\in [k]:\\ e\in P_i^2}} (1-\lambda_i) d_i\text{ for all } e \in E\right\}.
\end{equation*}
The value of a fractional solution is the maximum load
$\max_{e \in E} x_e$.
An unsplittable solution is a selection of $s_i$-$t_i$ paths $\mathcal{P}=(P_i)_{i \in [k]}$, i.e., $P_i\in \{P_i^1, P_i^2\}$ for $i\in [k]$.
Let the load of edge $e$ be defined as
\begin{align*}
    \load_{e}(\mathcal{P}) &\coloneqq \sum_{i \in [k]: e\in P_i} d_i.
\end{align*}
The objective in the ring loading problem is to minimize the maximal load on an edge.

We will consider a weighted and capacitated variant of the problem, where we are given a non-negative cost function $c \colon E \to \mathbb{R}_{\geq 0}$ and edge capacities $u(e)$ for all $e \in E$.
A ring loading network, together with a cost function and capacities yields a \emph{weighted, non-uniform ring loading instance}. Here, the objective is to compute a solution such that the load on each edge is at most its capacity, i.e., $x_e \leq u(e)$ or $\load_e(\mathcal{P}) \leq u(e)$, while minimizing the cost $c^Tx$ or $c^T\load (\mathcal{P})$.
If, additionally, all capacities $u(e)$ for $e \in E$ are the same, then we denote the instance as a \emph{weighted, uniform ring loading instance}, and use $u_{\mathrm{unif}}$ to denote the uniform capacity.
The unweighted version of these problems considers capacity bounds but no costs.

Däubel~\cite{daubel2022improved} showed that given a fractional solution $x$, there is a polynomial time algorithm which computes an unsplittable solution $\mathcal{P}$ with $\load_e(\mathcal{P}) \leq x_e + \frac{13}{10}d_{\max}$ for all $e \in E$, where $d_{\max}= \max_{i \in [k]} d_i$.
Moreover, Skutella~\cite{skutella2016note} showed that the additive violation with respect to a fractional solution has to be at least $\frac{11}{10} d_{\max}$.
Analogous to the single-source unsplittable flow setting, an algorithm as presented by Däubel~\cite{daubel2022improved} can be applied to the unweighted non-uniform ring loading problem by computing a fractional solution $x$ to the non-uniform ring loading instance and applying the algorithm to that solution, to compute an unsplittable solution $\mathcal{P}$ with $\load_e(\mathcal{P}) \leq x_e+\frac{13}{10}d_{\max} \leq u(e)+\frac{13}{10}d_{\max}$.

We first show in the following lemma that any algorithm for the uniform ring loading problem can be transformed in a black-box way in an algorithm for the non-uniform ring loading problem, while preserving the additive violation guarantees of the capacities.

\begin{lemma}
\label{lem:nonuniform_as_hard_as_uniform}
If for every fractionally feasible instance of the uniform ring loading problem, there exists an unsplittable solution $\mathcal{P}$ with guarantee $\load_e(\mathcal{P}) \leq u_{\mathrm{unif}} + \alpha d_{\max}$ for all $e \in E$, then also for every fractionally feasible instance of the non-uniform ring loading problem, there exists an unsplittable solution $\mathcal{P}$ with  $\load_e(\mathcal{P}) \leq u(e) + \alpha d_{\max}$ for all $e \in E$.

If such an unsplittable flow for the uniform ring loading problem can be computed in polynomial time, then this also applies to the non-uniform ring loading problem.
\end{lemma}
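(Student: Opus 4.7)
The strategy is a padding reduction: given a non-uniform instance with capacities $u(e)$ and a fractional solution $x$, I build a uniform instance $\mathcal{I}'$ to which the hypothesis applies, and then extract a solution for the original instance from the output. Let $U \coloneqq \max_{e\in E} u(e)$. I keep the cycle and the original commodities unchanged, and for each edge $e = \{a,b\}$ with $u(e) < U$ I add a \emph{dummy commodity} with source $a$ and sink $b$ of total demand $U - u(e)$, split into pieces of demand at most $d_{\max}$ so that the maximum demand of $\mathcal{I}'$ is still $d_{\max}$. The uniform capacity of $\mathcal{I}'$ is set to $U$.

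Fractional feasibility of $\mathcal{I}'$ is obtained by extending $x$ so that each dummy piece is routed on its \emph{short} (one-edge) path $\{e\}$: on each edge the load becomes $x_e + (U - u(e)) \leq u(e) + (U - u(e)) = U$. Invoking the hypothesis of the lemma, I obtain an unsplittable solution $\mathcal{P}'$ of $\mathcal{I}'$ with $\load_e(\mathcal{P}') \leq U + \alpha\, d_{\max}$ for every $e \in E$. I then form the desired unsplittable solution $\mathcal{P}$ for the original instance by retaining only the paths that $\mathcal{P}'$ assigns to the \emph{original} commodities.

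The heart of the argument is to show that, without loss of generality, \emph{every dummy piece in $\mathcal{P}'$ is routed on its short path}. If this holds, the dummy contribution to the load on each edge $e$ is exactly $U - u(e)$, so the load from the original commodities on $e$ is at most $(U + \alpha d_{\max}) - (U - u(e)) = u(e) + \alpha d_{\max}$, which is the bound claimed by the lemma. To justify this reduction to the ``all-short'' case, the plan is a local swap argument: whenever a dummy piece of demand $d \leq d_{\max}$ is routed on its long path, I reroute it to its short path; such a swap raises the load on its home edge by $d$ and lowers the load on every other edge by $d$.

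The main obstacle will be controlling these swaps: a single naive swap can push the load on the dummy's home edge above $U + \alpha d_{\max}$, so the swaps must be carried out in a coordinated way. The key observation I plan to exploit is that on any fixed edge, the increase caused by a swap at that edge is offset by the decrease caused by a swap at any other edge (since in a cycle the long path of a dummy traverses \emph{all} other edges); thus, summed over all edges, the net change on each edge is determined by the difference between its own long-routed demand and the total long-routed demand elsewhere. By ordering the swaps carefully and, when needed, arguing directly via an accounting that subtracts long-routed contributions on other edges from the uniform bound, I expect to obtain the desired per-edge non-uniform bound. Polynomial-time preservation is immediate, since both the reduction and the post-processing of at most $O(|E|)$ swaps are polynomial.
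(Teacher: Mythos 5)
Your overall plan --- a padding reduction that adds dummy commodities to absorb the slack $u_{\mathrm{unif}}-u(e)$ on each under-capacitated edge --- is the same idea the paper uses, but the way you attach the dummies leaves a genuine gap that your swap argument does not close.

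Concretely: with dummies whose endpoints are the two endpoints of the original edge $e$, each dummy piece is routed either on $\{e\}$ (short) or on $E\setminus\{e\}$ (long), and you have no control over which. Write $R_e$ for the real-commodity load on $e$ in the returned solution, $s_e$ and $\ell_e$ for the total demand of $e$'s dummies routed short and long (so $s_e+\ell_e=U-u(e)$). The uniform guarantee gives
\begin{equation*}
R_e + s_e + \sum_{e'\neq e}\ell_{e'} \;\leq\; U + \alpha\, d_{\max},
\end{equation*}
and the bound you want, $R_e\le u(e)+\alpha d_{\max}$, is equivalent to $\ell_e \le \sum_{e'\neq e}\ell_{e'}$. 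This can simply fail: the algorithm may route a single dummy long and every other dummy short, giving $\ell_e>0=\sum_{e'\neq e}\ell_{e'}$. Your proposed fix --- swap long-routed dummies to short --- is circular: after swapping all dummies short, the load on $e$ becomes $R_e + (U-u(e))$, and to certify that this is still at most $U+\alpha d_{\max}$ you would need $R_e\le u(e)+\alpha d_{\max}$, i.e.\ exactly the inequality you are trying to prove. Ordering the swaps or doing them partially does not escape this, because a coordinated set of swaps still requires the same inequality at the edge whose load increases.

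The paper's proof avoids the issue by changing the gadget: it \emph{subdivides} each under-capacitated edge $e$ into $e_1$ and $e_2$ and attaches $2r_e$ dummy commodities (of demand $(u_{\mathrm{unif}}-u(e))/r_e \le d_{\max}$ each, where $r_e=\lceil (u_{\mathrm{unif}}-u(e))/d_{\max}\rceil$), half with endpoints at the ends of $e_1$ and half at the ends of $e_2$. The crucial structural consequence is that every such dummy uses exactly one of $e_1$ or $e_2$ regardless of whether it is routed short or long, so by pigeonhole at least $r_e$ of the $2r_e$ dummies land on one of the two halves, contributing at least $u_{\mathrm{unif}}-u(e)$ dummy load there --- no matter what the algorithm decides. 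Since the real-commodity load on $e_1$ equals that on $e_2$ (they form a subdivision), subtracting this guaranteed dummy load from the uniform bound $u_{\mathrm{unif}}+\alpha d_{\max}$ gives $R_e\le u(e)+\alpha d_{\max}$. This subdivision-plus-pigeonhole step is the missing idea that makes the reduction robust against the rounding algorithm's choices; without it, your reduction does not go through.
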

\begin{proof}
    We consider a non-uniform instance with capacities $u(e)$ for $e \in E$ and show how to solve this with a guarantee of $\alpha$, given an algorithm with this guarantee for the uniform setting.
    
    First, we transform our instance into a uniform instance with global capacity $u_{\mathrm{unif}}$.
    For this sake, let $u_{\mathrm{unif}} \coloneqq \max_{e \in E} u(e)$. 
    For every edge $e=\{v,w\}$ with $u(e) \neq u_{\mathrm{unif}}$, we subdivide $e$ into two edges $e_1=\{v,w_{\text{new}} \}$ and $e_2=\{w_{\text{new}},w \}$.
    We define $r_e \coloneqq \left\lceil\frac{u_{\mathrm{unif}}-u(e)}{d_{\max}}\right\rceil$.
    Then we introduce a set $A(e)$ of $2r_e$ new commodities, each with demand $\frac{u_{\mathrm{unif}}-u(e)}{r_e} \leq d_{\max}$. 
    For $r_e$ of the new commodities, their start and end node are the endpoints of $e_1$.
    For the other $r_e$ new commodities, their start and end node are the endpoints of $e_2$.

    If the non-uniform instance admits an unsplittable solution with capacity violation of at most $\alpha d_{\max}$, then such a solution still exists for the new, uniform instance. This follows as any solution to the non-uniform instance can be extended to the uniform setting without increasing the capacity violation by routing each artificial commodities along a single edge (either $e_1$ or $e_2$).

    To this new, uniform instance, we can apply our algorithm with guarantee $\alpha$.
    So assume, we obtained an unsplittable solution with a capacity violation of at most $\alpha d_{\max}$ for the uniform setting. We claim that by removing the artificial commodities, we obtain a solution to the original instance with at most the same capacity violation.
    This follows as for each edge $e$, there is either on $e_1$ or on $e_2$ a load of at least $u_{\mathrm{unif}}-u(e)$ induced by the artificial commodities $A(e)$.
    To see this, observe that each artificial commodity in $A(e)$ induces a load of $\frac{u_{\mathrm{unif}}-u(e)}{r_e}$ on one of the edges $e_1$ and $e_2$ (on which one depends on the chosen path for this commodity).
    Thus, on one of the two edges $e_1$ and $e_2$, at least $r_e$ of the $2r_e$ commodities in $A(e)$ induce such a load, leading to a total load of at least $u_{\mathrm{unif}}-u(e)$ induced by the  commodities $A(e)$.
\end{proof}

We will show that
\Cref{cor:tradeoff} implies the following. 

\begin{theorem}
	\label{thm:ring_loading}
	Assume we can compute, for every fractionally feasible instance of the unweighted ring loading problem, an unsplittable solution $\mathcal{P}$ with guarantee $\load_e(\mathcal{P}) \leq u_{\mathrm{unif}} + \alpha d_{\max}$ for all $e \in E$. Then, given a weighted ring loading instance together with a fractional solution $x$ and some $\lambda \in (0,1]$, there is an algorithm that computes an unsplittable solution $\mathcal{P}=(P_i)_{i \in [k]}$, s.t. 
\begin{itemize}
\item $\sum_{i=1}^k c(P_i) \leq \frac{1}{\lambda} c^T x  $, and
	\item $\load_e(\mathcal{P}) \leq x_e+ (1+\lambda) \alpha \cdot d_{\max} $ for all $e \in E$.
\end{itemize}
If the algorithm for the unweighted ring loading problem has polynomial running time, then so does the algorithm for the weighted ring loading problem.
\end{theorem}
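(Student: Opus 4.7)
My plan is to instantiate Corollary~\ref{cor:tradeoff} for the ring loading problem, using the assumed uniform unweighted rounding algorithm (after lifting via Lemma~\ref{lem:nonuniform_as_hard_as_uniform}) as the required face-preserving rounding algorithm.

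First I would invoke Lemma~\ref{lem:nonuniform_as_hard_as_uniform} to lift the hypothesis to the non-uniform setting: given any $y \in Q_C$ viewed as capacities $u(e) = y_e$, the lifted routine returns in polynomial time an unsplittable solution $\mathcal{P}$ with $\load_e(\mathcal{P}) \leq y_e + \alpha d_{\max}$ for every $e \in E$. Next, I would cast this routine as a face-preserving rounding algorithm on $Q_C$ with error body $R = [-\alpha d_{\max}, \alpha d_{\max}]^E$. Face-preservation is enforced by preprocessing: for any commodity whose underlying $\lambda$-parameter in the fractional representation of $y$ attains a boundary value in $\{0,1\}$ (so that $y$ lies on the corresponding face of $Q_C$), the commodity's route is fixed accordingly before invoking the rounding. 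The two-sidedness of the error body is derived from the natural flip symmetry of the ring: swapping every commodity's two paths maps any $y \in Q_C$ to $y' \in Q_C$ with $y'_e = \sum_i d_i - y_e$ and any unsplittable $\mathcal{P}$ to one with load $\sum_i d_i - \load_e(\mathcal{P})$, so applying the one-sided routine to the flipped instance and un-flipping its output yields the matching lower-sided guarantee $\load_e(\mathcal{P}) \geq y_e - \alpha d_{\max}$.

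Finally I would apply Corollary~\ref{cor:tradeoff} with this FPRA and the given parameter $\lambda \in (0,1]$ to the fractional input $x$. Since $R - \lambda R \subseteq [-(1+\lambda)\alpha d_{\max}, (1+\lambda)\alpha d_{\max}]^E$, the corollary delivers both conclusions of the theorem in one stroke: the cost bound $c^T f^{\mathcal{P}} \leq \tfrac{1}{\lambda} c^T x$, which under the standard interpretation of $c(P_i)$ as the demand-weighted edge cost of the chosen path yields $\sum_{i=1}^{k} c(P_i) \leq \tfrac{1}{\lambda} c^T x$, and the load bound $\load_e(\mathcal{P}) \leq x_e + (1+\lambda)\alpha d_{\max}$ for every $e \in E$. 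Preservation of polynomial running time is immediate from the polynomial solvability of the linear program in line~\ref{line:opt_modified} of Algorithm~\ref{algo:modified} and the polynomial time of the lifted rounding routine.

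The main obstacle is the second step: the flip-symmetry argument most naturally produces \emph{two} separate candidate unsplittable solutions, one realizing the upper-sided bound and one realizing the lower-sided bound, and reconciling them into a single solution $\mathcal{P}$ satisfying both bounds simultaneously is what makes the reduction nontrivial. This likely requires either a symmetrization of the underlying uniform rounding procedure (so that it commutes with the flip operation) or a combinatorial argument specific to the ring that extracts a single simultaneously valid solution from the two candidates.
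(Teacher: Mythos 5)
Your opening and closing steps match the paper: invoke \Cref{lem:nonuniform_as_hard_as_uniform} to lift the hypothesis to the non-uniform setting, package the result as a $(Q_C, Z_C)$-FPRA with error body $R = [-\alpha d_{\max}, \alpha d_{\max}]^E$, and feed it to \Cref{cor:tradeoff}. The gap you flag at the end is, however, a genuine one, and your proposed flip-symmetry argument does not close it. Flipping the fractional solution $y$ to $y'$ with $y'_e = \sum_i d_i - y_e$, applying the one-sided routine, and un-flipping does produce an unsplittable solution satisfying $\load_e \geq y_e - \alpha d_{\max}$, but it is a \emph{different} unsplittable solution from the one you get by applying the routine to $y$ directly. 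There is no reason for either candidate to satisfy both the upper and the lower bound simultaneously, and without a two-sided bound holding for a \emph{single} output you do not have an FPRA with the stated error body, so \Cref{cor:tradeoff} cannot be invoked. Likewise, your preprocessing step (fixing commodities at $\lambda \in \{0,1\}$) is aimed at face-preservation only and does not touch the two-sidedness issue.

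The paper's fix is structurally different and hinges on an extra preprocessing phase borrowed from \cite{schrijver1999ring,skutella2016note}: iteratively shift flow between any pair of \emph{parallel} commodities until one of them becomes unsplit, then fix and delete it (tracking the deleted set $F$ and its fixed paths). After this, every remaining commodity crosses every other, the vertices can be relabeled as $s_1, \dots, s_{k'}, t_1, \dots, t_{k'}$ around the cycle, and the key identity
\begin{equation*}
  y_{\{s_i, s_{i+1}\}} + y_{\{t_i, t_{i+1}\}} = \sum_{j \in [k']} d_j
\end{equation*}
holds for \emph{every} (fractional or unsplittable) solution $y$ of the reduced instance. This is exactly what your flip idea is missing: because the identity holds simultaneously for the fractional input $\overline{x}$ and the rounded output $\overline{\mathcal{P}}$, the one-sided upper bound $\load_{\{s_i,s_{i+1}\}}(\overline{\mathcal{P}}) \leq \overline{x}_{\{s_i,s_{i+1}\}} + \alpha d_{\max}$ applied to the \emph{opposing} edge immediately yields $\load_{\{t_i,t_{i+1}\}}(\overline{\mathcal{P}}) \geq \overline{x}_{\{t_i,t_{i+1}\}} - \alpha d_{\max}$ for the \emph{same} solution. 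No second candidate needs to be reconciled. The crossing structure also gives face-preservation for free (every remaining commodity is fractionally split and its two paths cover the whole cycle, so no edge has load zero). Finally, you would need to reattach the fixed paths of $F$ at the end and verify that both the load bound and the cost bound survive relative to the original $x$; the paper does this explicitly via $\overline{x}_e + \sum_{i \in F: e \in P_i} d_i \leq x_e$ and the non-negativity of costs.
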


\begin{proof}
To utilize \Cref{cor:tradeoff}, we need a $(Q_C,Z)$-FPRA, where $Z$ corresponds to the load vectors of unsplittable solutions, with an error body that encodes the $\alpha d_{\max}$ two-sided additive error. However, the theorem statement only assumes that we can obtain one-sided error bounds for uniform-capacity instances. By \Cref{lem:nonuniform_as_hard_as_uniform}, the restriction to uniform capacities is without loss of generality. We make a series of transformations to the instance that will enable us to argue that the one-sided guarantee assumed in the theorem statement suffices to yield two-sided error bounds for the equivalent transformed instance, so that we can then apply \Cref{cor:tradeoff}.

We first use an observation made in \cite{schrijver1999ring,skutella2016note} to restrict our attention to more structured instances. 
We repeat the argument here for completeness.

First, observe that if any demand is sent unsplit in the fractional solution, i.e., all the demand $d_i$ is routed along a single $s_i$-$t_i$ path, we can as well, fix this path and delete the commodity and reduce the load along the path by $d_i$. 

 Say that two commodities $i$ and $j$ are in \emph{parallel} if there is an $s_i$-$t_i$ path and an $s_j$-$t_j$ path which are edge-disjoint. Otherwise, we say $i$ and $j$ \emph{cross}.

\begin{figure}
    \centering
 \begin{tikzpicture}[scale=1.2, every node/.style={font=\small}]
    \def\r{1.5}  \def\raa{1.4} \def\rab{1.85} \def\dist{4.5}

        \begin{scope}
\draw[thick] (0,0) circle(\r);
            
\def\sOneAngle{135}
            \def\tOneAngle{45}
            \def\sTwoAngle{-45}
            \def\tTwoAngle{-135}
            
\path (0,0) ++(\sOneAngle:\r) coordinate (s1);
            \path (0,0) ++(\tOneAngle:\r) coordinate (t1);
            \path (0,0) ++(\sTwoAngle:\r) coordinate (s2);
            \path (0,0) ++(\tTwoAngle:\r) coordinate (t2);
            
\fill[blue] (s1) circle (2pt);
            \node[blue, shift={(10:-10pt)}] at (s1) {$s_1$};
            
            \fill[blue] (t1) circle (2pt);
            \node[blue, shift={(15:10pt)}] at (t1) {$t_1$};
            
            \fill[red] (s2) circle (2pt);
            \node[red, shift={(-15:-10pt)}] at (s2) {$s_2$};
            
            \fill[red] (t2) circle (2pt);
            \node[red, shift={(10:10pt)}] at (t2) {$t_2$};
            
\draw[very thick, blue] 
                ([shift=(\sOneAngle:\raa)]0,0) arc[start angle=\sOneAngle, end angle=\tOneAngle, radius=\raa];
                
            \draw[very thick, red] 
                ([shift=(\sTwoAngle:\rab)]0,0) arc[start angle=\sTwoAngle, end angle=\tTwoAngle, radius=\rab];
\end{scope}

\begin{scope}[shift={(\dist,0)}]
\draw[thick] (0,0) circle(\r);
            
\def\sOneAngle{135}
            \def\sTwoAngle{45}
            \def\tOneAngle{-45}
            \def\tTwoAngle{-135}
            
\path (0,0) ++(\sOneAngle:\r) coordinate (s1);
            \path (0,0) ++(\sTwoAngle:\r) coordinate (s2);
            \path (0,0) ++(\tOneAngle:\r) coordinate (t1);
            \path (0,0) ++(\tTwoAngle:\r) coordinate (t2);
            
\fill[blue] (s1) circle (2pt);
            \node[blue, shift={(10:-10pt)}] at (s1) {$s_1$};
            
            \fill[blue] (t1) circle (2pt);
            \node[blue, shift={(15:10pt)}] at (t1) {$t_1$};
            
            \fill[red] (s2) circle (2pt);
            \node[red, shift={(10:8pt)}] at (s2) {$s_2$};
            
            \fill[red] (t2) circle (2pt);
            \node[red, shift={(10:10pt)}] at (t2) {$t_2$};
            
\draw[very thick, blue] 
                ([shift=(\sOneAngle:\raa)]0,0) arc[start angle=\sOneAngle, end angle=\tOneAngle, radius=\raa];
                
            \draw[very thick, red] 
                ([shift=(\sTwoAngle:\rab)]0,0) arc[start angle=\sTwoAngle, end angle=\tTwoAngle, radius=\rab];
\end{scope}

\begin{scope}[shift={(2*\dist,0)}]
\def\r{1.5cm}

\foreach \angle/\name/\label in {
        170/s1/$s_1$, 
        150/s2/$s_2$, 
        130/s3/$s_3$, 
         10/s4/$s_k$
    } {
        \path (0,0) ++(\angle:\r) coordinate (\name);
        \fill (\name) circle (2pt);
        \node at (\name) [label={\angle:\label}] {};
    }
        \node at (90:\r)  [label={$\ldots$}] {};

\foreach \angle/\name/\label in {
        -170/tk/$t_k$, 
        -150/tk1/$t_k-1$, 
        -50/t3/$t_3$, 
         -30/t2/$t_2$, 
         -10/t1/$t_1$
    } {
        \path (0,0) ++(\angle:\r) coordinate (\name);
        \fill (\name) circle (2pt);
        \node at (\name) [label={\angle:\label}] {};
    }
        \node at (-90:\r)  [label={$\ldots$}] {};

\draw[thick] (0,0) circle(\r);  

\draw[very thick, blue] 
  ([shift=(130:\r)]0,0) arc[start angle=130, end angle=150, radius=\r];
\draw[very thick, blue] 
  ([shift=(-30:\r)]0,0) arc[start angle=-30, end angle=-50, radius=\r];
\end{scope}

\end{tikzpicture}
     \caption{On the left we see two parallel commodities, in the middle two crossing commodities and on the right we see a reduced instance, where two edges whose loads sums up to the total demand are highlighted in blue. For all instances the undirected graph $G=(V,E)$ was drawn.}
    \label{fig:enter-label}
\end{figure}
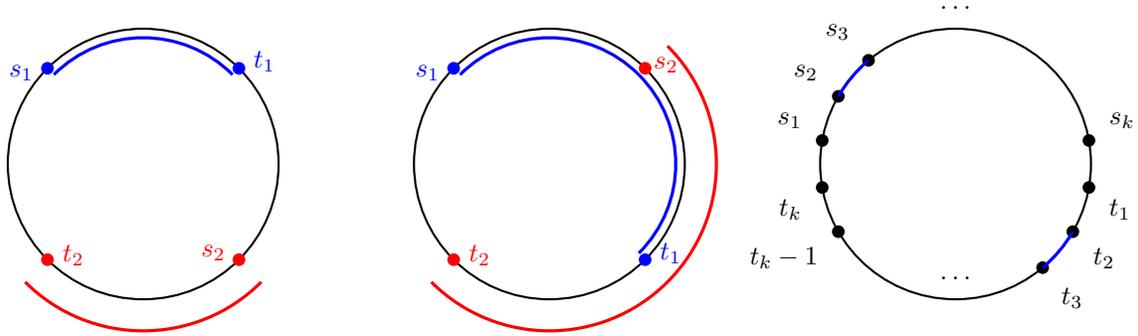

Schrijver, Seymour, and Winkler~\cite{schrijver1999ring} and Skutella~\cite{skutella2016note} observed that given an instance with parallel demands and a fractional solution, then there exists a fractional solution where one of the commodities just uses a single path while it does not increase the load on any edge. 
Let $i$ and $j$ be two parallel commodities.
For $\ell \in \{i,j\}$, let $P_\ell^1$ and $P_\ell^2$ be the two paths of commodity $\ell$ along which the fractional solution routes a load of $\lambda_\ell^1$ and $\lambda_\ell^2$, respectively. 
Up to renaming, we have $P_i^1 \subset P_j^2$ and $P_j^1 \subset P_i^2$ because $i$ and $j$ are parallel; see also \Cref{fig:enter-label}. 
By shifting $\min \{ \lambda_i^2, \lambda_j^2\}$ from $P_i^2$ to $P_i^1$ and from $P_j^2$ to $P_j^1$ we obtain a fractional solution that does not increase the load on any edge. Moreover, the commodity by which the minimum is attained is routed unsplittably after the modification and can thus be eliminated, as explained before.
As the load on edges does not increase and costs are non-negative, also the cost does not increase by this operation.

We can iterate this process to obtain a reduced instance of the ring loading problem where all commodities are crossing together with a fractional solution $\overline x$. Let $F$ be the set of indices of commodities whose paths have been fixed and which have been deleted during this process and let $P_i$ be the path for $i \in F$. Then, for all $e \in E$, we have that $\overline x (e) + \sum_{i \in F: e \in P_i} d_i\leq x(e)$. 
Let $k'$ be the number of remaining commodities and relabel them from 1 to $k'$.

Like in \cite{skutella2016note}, we observe that we can assume that the vertices around the cycle in clockwise order are $s_1, s_2, \ldots, s_{k'}, t_1, \ldots,  t_{k'}$.
This holds as all commodities are crossing, and in the original undirected graph $s_i$ and $t_i$ can be used interchangeably.
Moreover, if a vertex does not belong to any demand pair, we can contract the two adjacent edges $e_1$ and $e_2$ and assign them the cost $c(e_1)+c(e_2)$ and the minimum of the two capacities because the load on those edges is the same in any (fractional or unsplittable) solution.

We continue by observing that any fractional solution $y$ for such an instance satisfies $y_{\{s_i,s_{i+1}\}}+y_{\{t_i,t_{i+1}\}} = \sum_{j \in [k']} d_j$ for any $i \in [k']$ (where we interpret $s_{k'+1} =t_1$ and $t_{k'+1}=s_1$).
This holds because $\{s_i,s_{i+1}\}$ carries the load of all commodities $j \in [i]$ that is sent clockwise and the edge $\{t_{i+1},t_i\}$ carries the load of all commodities $j \in [i]$ that is sent counter-clockwise. Similarly, $\{s_{i+1},s_{i}\}$ carries the load of all commodities $j \in \{i+1, \ldots , k'\}$ that is sent counter-clockwise and $\{t_i,t_{i+1}\}$ carries the load of all commodities $j \in \{i+1, \ldots , k'\}$  that is sent clockwise. In total this shows 
\begin{equation}
	\label{eq:opposing_edges}
y_{\{s_i,s_{i+1}\}}+y_{\{t_i,t_{i+1}\}} = \sum_{j \in [k']} d_j \qquad \text{for all } i \in [k'].
\end{equation}

We exploit this to observe that any unsplittable solution $\overline{\mathcal{P}}$ with 
$\load_e(\overline{\mathcal{P}}) \leq \overline{x}_e +\alpha d_{\max}$ for all edges $e \in E$, also fulfills $\load_e(\overline{\mathcal{P}}) \geq \overline{x}_e -\alpha \cdot  d_{\max}$ for all edges $e \in E$. This follows by applying \eqref{eq:opposing_edges} twice,
once in the first equality below to the unsplittable solution $\overline{P}$ (which is a special case of a fractional solution), and the second time in the second equality below to the fractional solution $\overline{x}$. First consider $e=\{t_i,t_{i+1}\}$:
\begin{align*}
	\load_{\{t_i,t_{i+1}\}}(\overline{\mathcal{P}})&= \left(\sum_{j \in [k']} d_j\right) - 	\load_{\{s_i,s_{i+1}\}}(\overline{\mathcal{P}})\\
	                                               &= \overline x_{\{s_i,s_{i+1}\}}+\overline x_{\{t_i,t_{i+1}\}}- 	\load_{\{s_i,s_{i+1}\}}(\overline{\mathcal{P}})\\
	& \geq \overline x_{\{t_i,t_{i+1}\}} - \alpha \cdot d_{\max}.
\end{align*}
The analogous statement for arc $\{s_i,s_{i+1}\}$, i.e.,  $\load_{\{s_i,s_{i+1}\}}(\overline{\mathcal{P}})\geq \overline x_{\{s_i,s_{i+1}\}} - \alpha \cdot d_{\max}$ follows in a completely symmetric manner. Moreover, by assumption and as we interpret $s_{k'+1} =t_1$ and $t_{k'+1}=s_1$, all arcs are either of the type $\{s_i,s_{i+1}\}$ or $\{t_i,t_{i+1}\}$.

We can assume that we are given a rounding algorithm for the uniform setting with guarantee~$\alpha$. By \Cref{lem:nonuniform_as_hard_as_uniform}, this implies that there is a rounding algorithm for the non-uniform setting with guarantee $\alpha$. By setting $u(e)=\overline{x}(e)$, this yields a $(Q_C,Z_C)$-FPRA with error body
$R=[- \alpha\cdot  d_{\max},\alpha \cdot  d_{\max}]^A$ where $Z_C=\{ (\load_a(\mathcal{P}))_{a \in A} \colon \text{$\mathcal{P} $ is an unsplittable solution}\}$.
Note that any rounding algorithm will be face preserving w.r.t.~$Q_C$, because 
after the pre-processing all demands are crossing and are sent fractionally. Thus, every commodity uses both possible paths. This means, arcs with load 0 cannot be used by any feasible solution and the unsplittable solution will lie on the same face of $Q_C$ as $\overline x$.

This means we can apply \Cref{cor:tradeoff} 
to obtain an unsplittable solution $\widetilde{\mathcal{P}}$ with cost at most $\frac{1}{\lambda} c^T \overline x$ and $\overline x_e - (1+\lambda) \alpha d_{\max} \leq \load_e(\widetilde{\mathcal{P}}) \leq \overline x_e + (1+\lambda) \alpha d_{\max} $.

It remains to add the fixed paths. This will not increase the cost of the unsplittable solution more than the cost of the fractional solution. The same holds true for the load on an edge. Thus, the statement follows. More formally, let $\mathcal{P}$ be $\widetilde{\mathcal{P}}$ together with the fixed paths of commodities $i \in F$. 
We have that 
\begin{align*}
    \load_e(\mathcal{P}) &= \load_e(\widetilde{\mathcal{P}}) + \sum_{i \in F: e \in \widetilde{P}_i} d_i\\
&\leq \overline x_e + (1+\lambda) \alpha d_{\max} + \sum_{i \in F: e \in \widetilde{P}_i} d_i \leq x_e + (1+\lambda) \alpha d_{\max}.
\end{align*}

Similarly, 
\[
    \sum_{i=1}^k  c( P_i ) = \sum_{i=1}^{k'} c( \widetilde P_i ) + \sum_{i \in F} c(P_i) \leq \frac{1}{\lambda}  c^T\overline x + \sum_{i \in F} c(P_i) \leq \frac{1}{\lambda}  c^T x.
\]

Note, however, that due to the rerouting of parallel commodities, the lower bounds on the load with respect to $\overline x$ do not translate to lower bounds with respect to $x$. 
\end{proof}

We remark that the above result for $\lambda=1$ could also be achieved similarly to the cost result for planar graphs in \cite{traubSingleSourceUnsplittableFlows2024}, by applying a standard argument from discrepancy theory previously used in \cite{bansal2022flow}. First, as above, the instance has to be pre-processed such that all commodities are crossing. Then, 
by repeatedly solving the unweighted rounding problem for carefully constructed half-integral solutions, one can solve the weighted rounding problem, again with an additive error that is twice as large as in the unweighted rounding problem. 
One advantage of our technique is that we need only a single call to the algorithm for unweighted instances.

Finally, applying \Cref{thm:ring_loading} to the result of Däubel~\cite{daubel2022improved} for $\lambda=1$ implies \Cref{thm:mainRingLoading}.

\printbibliography

\end{document}